\newtcolorbox{summarybox}[1][]{
  enhanced,
  breakable,
  colback=gray!3, colframe=gray!50,
  boxrule=0.4pt, arc=2pt,
  left=6pt,right=6pt,top=6pt,bottom=6pt,
  fonttitle=\bfseries,
  fontupper=\small,
  title=Workflow summary,
  #1
}
\newcommand{\bra}[1]{\langle {#1} |}
\newcommand{\ket}[1]{| {#1} \rangle}
\newcommand{\ketbra}[2]{\ensuremath{\left|#1\right\rangle\!\!\left\langle#2\right|}}
\newcommand{\braket}[2]{\ensuremath{\!\left\langle#1|#2\right\rangle}\!}
\newcommand{\tr}[1]{\mathrm{Tr}\left( #1 \right)}
\renewcommand{\v}[1]{\ensuremath{\boldsymbol #1}}
\theoremstyle{plain}
\newtheorem{thm}{Theorem}
\newtheorem{lemma}[thm]{Lemma}
\theoremstyle{definition}
\begin{document}
	
	\title{Randomized adiabatic quantum linear solver algorithm with optimal complexity scaling and detailed running costs}
	\author{David Jennings}
	\affiliation{PsiQuantum, 700 Hansen Way, Palo Alto, CA 94304, USA}
	\author{Matteo Lostaglio}
			\thanks{Lead author email: mlostaglio@psiquantum.com}
	\affiliation{PsiQuantum, 700 Hansen Way, Palo Alto, CA 94304, USA}

	\author{Sam Pallister}
		\affiliation{PsiQuantum, 700 Hansen Way, Palo Alto, CA 94304, USA}.
	\author{Andrew T Sornborger}
    \affiliation{Computer, Computational, and Statistical Sciences Division, Los Alamos National Laboratory, Los Alamos, New Mexico 87545, USA}
	\author{Yi\u{g}it Suba\c{s}\i}
	\affiliation{Computer, Computational, and Statistical Sciences Division, Los Alamos National Laboratory, Los Alamos, New Mexico 87545, USA}

	\begin{abstract}
		Solving linear systems of equations is a fundamental problem with a wide variety of applications across many fields of science, and there is increasing effort to develop quantum linear solver algorithms.  Ref.~\cite{subacsi2019quantum} proposed a randomized algorithm inspired by adiabatic quantum computing, based on a sequence of random Hamiltonian simulation steps, with suboptimal scaling in the condition number $\kappa$ of the linear system and the target error $\epsilon$. Here we go beyond these results in several ways. Firstly, using filtering~\cite{lin2020optimal} and Poissonization techniques~\cite{cunningham2024eigenpath}, the algorithm complexity is improved to the optimal scaling $O(\kappa \log(1/\epsilon))$ -- an exponential improvement in $\epsilon$, and a shaving of a $\log \kappa$ scaling factor in $\kappa$. Secondly, the algorithm is further modified to achieve constant factor improvements, which are vital as we progress towards hardware implementations on fault-tolerant devices. We introduce  a cheaper randomized walk operator method replacing Hamiltonian simulation -- which also removes the need for potentially challenging classical precomputations; randomized routines are sampled over optimized random variables; circuit constructions are improved. We obtain a closed formula rigorously upper bounding the expected number of times one needs to apply a block-encoding of the linear system matrix to output a quantum state encoding the solution to the linear system. The upper bound is $837 \kappa$ at $\epsilon=10^{-10}$ for Hermitian matrices.
	\end{abstract}
	
	\maketitle
	
	\renewcommand{\thefootnote}{\arabic{footnote}}
	\setcounter{footnote}{0}

    \section{Introduction}

	 Given an $N \times N$ matrix $A$ and a $N$-dimensional vector~$\v{b}$, quantum linear solver algorithms (QLSAs) are tasked with returning the solution of a linear system $A \v{y} = \v{b}$ encoded as a quantum state. Linear systems are ubiquitous, since many problems admit reductions to them. Notably, many quantum algorithms for linear \cite{berry2017quantum,childs2021high, krovi2022improved, berry2022quantum, ameri2023quantum, bagherimehrab2023fast}  and nonlinear \cite{liu2021efficient,an2022efficient, krovi2022improved, jin2022quantum, surana2022carleman, costa2023further, krovi2024quantum} differential equations rely on linear solvers. Other applications include data fitting~\cite{wiebe2012quantum}, scattering~\cite{clader2013preconditioned}, machine learning~\cite{rebentrost2014quantum, liu2024towards} and optimization~\cite{dalzell2022end, krovi2024quantum}.

	Formally, given $\epsilon>0$, a QLSA outputs a quantum state $\epsilon$-close to a vector $\ket{y} \propto A^{-1} \ket{b}$,  giving the solution vector $\v{y} = A^{-1}\v{b}$ encoded as a quantum state. 
	The cost of the algorithm is given in terms of its \emph{query complexity} $Q$, i.e., the number of times we need to apply unitaries (`oracles') for state preparation of $\ket{b}$ and a block-encoding of $A$, namely a unitary encoding $A/\alpha$ for some $\alpha >0$ in one of its blocks.  A worst-case upper bound for the running cost of the algorithm can be given as a function of three parameters: $(\epsilon, \kappa, \alpha)$, where $\kappa$ is an upper bound on the condition number of the matrix 
and $\alpha$ is a rescaling constant, which depends on the specific block-encoding construction. Running QLSAs on the early generations of fault-tolerant quantum computers will only be feasible if we are able to (1) Bring down their cost and (2) Successfully incorporate them within an end-to-end quantum algorithm. The present manuscript focuses on the first problem. We introduce a new randomized QLSA and formally prove its performance is competitive with state-of-the-art.  
      
After Harrow, Hassidim, and Lloyd (HHL) developed the first QLSA~\cite{harrow2009quantum}, an extensive literature focused on bringing down the \emph{asymptotic scaling} of $Q$. The original HHL algorithm had an $O(\kappa^2/\epsilon)$ complexity. The dependence on the condition number was almost quadratically improved to $O((\kappa/\epsilon^3) \textrm{polylog}(\kappa/\epsilon))$ by Ambainis, at the price of a worse error scaling~\cite{ambainis2012variable}. The poor dependence on $\epsilon$ was remedied by Childs \emph{et al.}~\cite{childs2017quantum}, who obtained an $O(\kappa \, \textrm{polylog}(\kappa/\epsilon))$ algorithm, and a similar scaling was also realized in Ref.~\cite{chakraborty2018power}. A drawback of these works is that they involve a complex `variable-time amplitude-amplification' routine. Using a technique inspired by adiabatic quantum computing, Suba\c{s}\i \emph{ et al.}~\cite{subacsi2019quantum} introduced an adiabatic randomized algorithm that removed the need for such a routine, but with scaling $O(\kappa \log(\kappa) / \epsilon)$, which, however, is exponentially worse in $\epsilon$ than Ref.~\cite{childs2017quantum}.  Lin \emph{et al.} introduced a filtering technique~\cite{lin2020optimal}, providing an alternative path to an exponentially improved error scaling. This technique was exploited in the QLSA by Costa \emph{et al.}~\cite{costa2022optimal}, to achieve a claimed optimal $O(\kappa \log (1/\epsilon))$ scaling, with a worst-case constant prefactor to $\kappa$ upper bounded by $2\times 10^5$. Numerical studies of the average-case constant prefactors on randomized low-dimensional and low condition number instances ($N \leq 16$, $\kappa \leq 50)$ shows the average-case constant prefactors are much lower than the worst-case upper bound~\cite{costa2023discrete}. These studies cannot be scaled to $(N,\kappa)$ relevant in applications, so we need to rely on extrapolation.

Recently, Dalzell~\cite{dalzell2024shortcut} proposed an elegant QLSA, that achieves an optimal scaling of $O(\kappa \log (1/\epsilon))$ with much stronger guarantees on the worst-case prefactor to $\kappa$, upper bounded by $80$. This is the current state-of-the-art for quantum resource estimates with rigorous cost guarantees. The picture has been recently complicated by the introduction of novel techniques that allow improvement in the complexity in the number of calls to the unitary preparing $\ket{b}$ at the price of a small overhead on the number of calls to the block-encoding of $A$, with constant prefactor analysis yet to be carried out~\cite{low2024quantum}. See~\cite{morales2024quantum} for references to a growing literature on the topic.
      
In this work we focus on showing that one can obtain state-of-the art rigorous performance guarantees for randomized adiabatic QLSA. Starting from the randomized adiabatic algorithm in \cite{subacsi2019quantum}, we apply `Poissonization'~\cite{cunningham2024eigenpath} with an optimized schedule, filtering techniques~\cite{lin2020optimal} and a novel randomized walk operator method that entirely foregoes the need for a Hamiltonian simulation subroutine. Combining these and smaller optimizations at the level of block-encoding constructions, the complexity scaling is improved from the $O(\kappa/\epsilon \log \kappa)$ of the original proposal~\cite{subacsi2019quantum} to the optimal $O(\kappa \log (1/\epsilon))$, with constant prefactors of about $1734$ ($867$ for Hermitian matrices), a factor of $10$ -- $20$ from the best available bounds for any QLSA~\cite{dalzell2024shortcut}. This puts adiabatic randomized methods on the map of the most competitive QLSA with rigorous non-asymptotic cost guarantees. In terms of the broader context and timeliness of our work, we would note the following. Firstly, we expect further improvements can be made on the adiabatic line by exploiting more refined techniques.  Secondly, a concrete virtue of our method over prior results is its simplicity of implementation -- for example, we do not require the computation of any Quantum Singular Value Transform (QSVT) phase factors. This could make it well-suited to near/mid-term implementations on actual quantum hardware. Thirdly, while there are now a few optimal linear solvers, which can be compared in the worst-case via the constant pre-factors, this simple comparison need not hold for particular classes of problems. Ultimately it will be important to compare different linear solvers on important problems of interest, and one method could be more easily tailored to a specific problem at hand (similar to Hamiltonian simulation via Trotterization versus QSVT). Given the elementary ingredients of our algorithm, we expect that it will be well-suited to further optimization and application. In summary, given the very small handful of optimal linear solvers in existence, we expect that going forward the core method developed here has the ability to both compete with and complement existing methods.

\subsection*{ Overview of contributions} 
	
The high-level 
goal
of fault-tolerant adiabatic algorithms is to encode the normalized solution $\ket{y}$ of the linear system into the nullspace of a Hamiltonian $H(1)$. To obtain $\ket{y}$, we prepare an eigenstate of zero eigenvalue of a simple Hamiltonian $H(0)$ and then change the Hamiltonian to $H(1)$ along a discrete trajectory $H(s_j)$, where each $H(s_j)$ is constructed from the data $A$ and~$\ket{b}$. The scheduling is constructed so that one has rigorous guarantees on the quality of the output.

The original algorithm~\cite{subacsi2019quantum} fixes a set of $s_j \in [0,1]$, for $j=1,\dots, q$, and at each point it performs a Hamiltonian simulation routine for a randomized time $t_j$. The time $t_j$ is sampled from a uniform distribution over an interval wide enough that it approximatively dephases the system with respect to the eigenbasis of $H(s_j)$, up to a controlled error. This sequence of dephasings probabilistically maps the zero eigenstate of $H(s_{j-1})$ into that of $H(s_{j})$ at each step $j=1,\dots q$. Roughly speaking, the minimal gap of the sequence of Hamiltonians is $O(\kappa)$, leading to $t_j = O(\kappa)$, and $q=O(\epsilon^{-1} \log \kappa )$. Overall, this leads to the $O(\epsilon^{-1} \kappa \log \kappa)$ complexity scaling reported in~\cite{subacsi2019quantum}.

We modify the original algorithm~\cite{subacsi2019quantum} in a number of ways: 
\begin{enumerate}
    \item We generate $s_j$ in accordance with a Poisson process with a rate depending on the gap as proposed and discussed in Ref.~\cite{cunningham2024eigenpath}, which leads to $O(\log \kappa)$ in savings. We optimize the rate and tighten the error analysis to achieve constant factor savings.
    \item Inspired by previous works~\cite{boixo2009eigenpath, poulin2018quantum}, we show that we can entirely forego the Hamiltonian simulation routines.
    Dephasing is instead achieved by applying a walk operator $W(s_j)$, that is constructed from a block-encoding of $H(s_j)$, a random number of times $m$.
    This simplifies the quantum algorithm and reduces its cost. It also removes the need for classical precomputations of phase angles at each $j$, which is required by state-of-the-art Hamiltonian simulation techniques based on quantum signal processing~\cite{gilyen2018quantum}.
    \item Using results from eigenpath traversal theory~\cite{boixo2009eigenpath}, we sample $m$ at each $s_j$ from an optimized non-uniform random time variable, leading to  constant factor savings.
    \item We prepare a state with a constant error $\epsilon = 1/2$ and then apply filtering techniques~\cite{lin2020optimal, costa2022optimal}, which probabilistically prepares a state close to the target with exponential savings in $\epsilon$, from $O(1/\epsilon)$ to $O(\log (1/\epsilon))$. 
\end{enumerate} 

 Another difference from Ref.~\cite{subacsi2019quantum} is that we perform a detailed cost analysis, providing analytical, worst-case non-asymptotic bounds on the number, $Q$, of applications of the core unitary block-encodings given $\alpha$, $\kappa$ and $\epsilon$. The following is our main result:
		\begin{thm}[Optimal QLSA with explicit counts]
		\label{thm:querycountsQLSA}
		Consider a system of linear equations $A\v{y}=\v{b}$, where $A$ is an $N \times N$ dimensional matrix  scaled so that the singular values of $A$ lie in $[1/\kappa, 1]$.  Denote by $\ket{b}$ the normalized state that is proportional to $\sum_i b_i\ket{i}$, and by $\ket{A^{-1}b}$ the normalized state proportional to $A^{-1}\ket{b}$.  Assume access to
		\begin{enumerate}
\item A  unitary $U_A$ that encodes the matrix $A/\alpha$ in its top-left block, for some constant $\alpha \ge 1$, using a number of ancilla qubits equal to $a$,
\item An oracle, $U_b$, preparing $\ket{b}$.
\end{enumerate}
Then, there is a randomized quantum algorithm that outputs a quantum state $\epsilon$-close in $1$--norm to $\ket{A^{-1}b}$, using, in the worst case, an expected number of calls $Q^*$ to (controlled) $U_{A}$ or $U_{A}^\dag$ and $2 Q^*$  to (controlled) $U_b$ or $U_b^\dag$, where $Q^*=O(\kappa \log 1/\epsilon)$. Specifically,
		\footnotesize
		\begin{align*}
		 Q^* \le 835.4 \alpha \kappa + \alpha \kappa \ln \frac{2}{\sqrt{1+\epsilon/4}-1} +3.
		\end{align*}
		\normalsize
		The success probability is lower bounded by \mbox{$1/2 - \epsilon/4$}. The algorithm requires $a+ 7 + \lceil \log_2 N\rceil $ logical qubits. A factor of $2$ to the cost and an ancilla qubit can be saved if $A$ is Hermitian.
         	The expected query complexity including the failure probability is
	\begin{equation}
 \label{Eq:querybound}
		Q = 2Q^*/(1- \epsilon/2).
	\end{equation}
	\end{thm} 

 We highlight that, since the algorithm has a randomized component, for a fixed instance the cost of the algorithm is a random variable. We upper bound the average cost taking the worst case over all $N \times N$ matrices with fixed condition number upper bound, $\kappa$.  Alternatively, since the single-shot failure probability is $1/2+\epsilon/4 \approx 1/2$, if we run the simulation a number $\sim \log_2 (1/\delta)$ times, we reduce the failure probability to any $\delta >0$ and obtain a query count $\lceil Q^* \log_2 (1/\delta) \rceil $.  The parameter $\alpha$ is common to all algorithms invoking a block-encoding of the linear system matrix, and
its value depends on the underlying access model to $A$, whose choice should be tailored to each specific problem. 
 A number of constructions exist for sparse matrices~\cite{lin2022lecture, camps2022explicit, sunderhauf2023block}, which have $\alpha$ equal to the geometric mean of column and row sparsity,  and furthermore have $\mathrm{polylog}(N)  \log(1/\epsilon)$ gate cost if the position of the nonzero elements and their values are efficiently computable~\cite{lin2022lecture, camps2022explicit, sunderhauf2023block}.  Under these assumptions $a = O(\mathrm{polylog}(N))$. Other constructions allow to efficiently block-encode dense matrices with special structure~\cite{nguyen2022block, li2023efficient}. Finally, note that the above cost does not include the extraction of application-specific relevant information from the solution vector.

 \begin{table*}[t!]
    \centering
    \begin{adjustbox}{max width=\textwidth}
\begin{tabular}{|c|c|c|c|}
    \hline
     \textbf{Quantum linear solver} & \textbf{Asymptotic} & \textbf{Explicit upper bound} &  \textbf{Phase angle}\\  & \textbf{complexity} & at $(\alpha,\kappa,\epsilon)=(1, 10^6, 10^{-10})$ & \textbf{computation} \\
    \hline
    \hline
     Randomized adiabatic method \cite{subacsi2019quantum} & $O(\kappa \log(\kappa) /\epsilon)$ & $1.03 \times 10^{17}$ ($5.14 \times 10^{16}$) & Yes \\
    Deterministic adiabatic walk method \cite{costa2022optimal} & $O(\kappa \log(1/\epsilon))$ & $2.35 \times 10^{11}$ ($1.17 \times 10^{11}$) & No \\
    QSVT reflection and projection~\cite{dalzell2024shortcut} & $O(\kappa \log(1/\epsilon))$ & $8.02 \times 10^7$ ($8.02 \times 10^7$) & Yes \\
    \hline
    \hline
     Randomized adiabatic walk method & $O(\kappa\log(1/\epsilon))$ & $1.72 \times 10^9$ ($8.61 \times 10^8$) & No  \\ 
    \hline
\end{tabular}
    \end{adjustbox}
    \caption{Comparison of the randomized adiabatic walk method proposed in this work with alternative  optimal quantum linear solvers for which explicit, rigorous query complexity bounds have been derived. The values in parenthesis in the third column refer to the special case in which the linear system of equation is Hermitian. The practical performance of all these methods can be expected to be significantly better than what the upper bounds suggest. An initial numerical study of the deterministic adiabatic method and a prior non-optimal version of the present algorithm has been done in ~\cite{costa2023discrete}. In future work, it would be of interest to perform a detailed numerical study of the present optimal algorithm, and compare it with the other asymptotically optimal algorithms on a range of benchmark problems. }
\end{table*}

\subsection*{Overview of the algorithm}
	
	We now provide a description of the core components of the algorithm and leave the detailed analysis to the rest of the paper. A summary of the workflow is given at the end of the section.
	
	\bigskip
	
	\textbf{Step 1: Setting up the problem and constructing the oracles.} A linear system $A \v{y} = \v{b}$ is given, together with an upper bound, $\kappa$, on the condition number of $A$ and an error tolerance, $\epsilon$. Via an appropriate rescaling of the linear system, discussed in Sec.~\ref{app:hermitianrescaling},
	we can always take $A$ to be an $N \times N$ non-singular
	matrix with $\|A\| \leq 1$,  where $\|A\|$ is the operator norm of $A$.
	
	We assume access to two unitaries (the `oracles') that are at the basis of QLSA:
	\begin{enumerate}
		\item A unitary $
		U_b \ket{0} := \ket{b} \propto \sum_{j=1}^N b_j \ket{j}$, encoding $\v{b}$ as a quantum state when applied to a reference state~$\ket{0}$. 
		\item A unitary block-encoding of $A$, i.e. a unitary matrix $U_A$ with the block form
		\begin{equation}
			U_A=	\begin{bmatrix}
				A/\alpha & * \\
				* & *
			\end{bmatrix},
		\end{equation}
		where $\alpha \geq 1$ without loss of generality and stars indicate any additional matrices consistent with unitarity. 
	\end{enumerate}
	More formally, an $(\alpha, a, \epsilon_{M})$--\emph{block-encoding} of a $2^n \times 2^n$ matrix $M$ 
    is a $2^{a+n} \times 2^{a+n}$ unitary  $U_M$ such that
	\begin{equation}
		U_M \ket{0^a} \ket{\psi_n} =  \ket{0^a} \frac{M'}{\alpha}\ket{\psi_n} + \ket{\perp},
	\end{equation}
	where $(\bra{0^a} \otimes I_n)\ket{\perp} = 0$ and $\|M - M'\|\leq \epsilon_{M}$. Here $\ket{\psi_n}$ denotes an arbitrary $n$-qubit state, $\ket{0^a}:=\ket{0}^{\otimes a}$, and $I_n$ is the identity over $n$ qubits. In other words, $U_M$ encodes in a block (identified by the first $a$ qubits being in state zero) a matrix proportional to $M'$, which is $\epsilon_{M}$-close to~$M$. 
	
	The central goal of a QLSA is to output a sufficiently good approximation to the quantum state $\ket{y}$, where
	\begin{equation}
 \label{eq:definitionofy}
		\ket{y} \propto A^{-1} \ket{b}.
	\end{equation}
	The difficulty of the problem is quantified in terms of the \emph{query complexity} $Q$, i.e., the number of times we need to implement (controlled) $U_A$ or $U^\dag_A$, $U_b$ or $U^\dag_b$ in order to realize this output. In this work we shall be problem-agnostic, and assume access to an $(\alpha,a,0)$--block-encoding of the linear system matrix $A$,  although the extension to $\epsilon_{A}>0$ is straightforward.

	\bigskip
	
	\textbf{Step 2: Hamiltonian encoding and Poisson adiabatic trajectory.}  Extending the approach from Ref.~\cite{subacsi2019quantum}, we drop the assumption that $A$ is Hermitian and encode the solution of the \emph{Hermitian extension} of the linear system as an eigenstate of zero eigenvalue of a particular Hamiltonian.  Specifically, we introduce	
    \small\begin{equation}\label{eqn:adiabatic-H}
		H(s) = \ketbra{0}{1} \otimes A(s) \Pi + \ketbra{1}{0} \otimes \Pi A(s), \quad s \in [0,1],
	\end{equation}
    \normalsize
where we have, 
    \begin{align}
		\label{eq:As}
        \Pi &:= I - \ketbra{+,0,b}{+,0,b} \nonumber \\
		A(s) &:= (1-s) Z \otimes I  + s X \otimes \bar{A},
	\end{align}
   and 
   \begin{align}
   \label{eq:Abardef}
       \bar{A} = \ketbra{0}{1} \otimes A + \ketbra{1}{0} \otimes A^\dag
   \end{align} 
   is the Hermitian extension of $A$. The linear system associated to $\bar{A}$ is
		\begin{equation}
  \label{eq:Hermitianextendedlinearsystem}
		\bar{A}
		 \ket{1}\ket{y}	 =  	\ket{0} \ket{b}.
	\end{equation}
 Since $\bar{A}^{-1} = \ketbra{1}{0} \otimes A^{-1} + \ketbra{0}{1} \otimes (A^\dag)^{-1}$, using Eq.~\eqref{eq:definitionofy} the solution to the extended linear system is 
	\begin{equation}
		\label{eq:finalsystemsolution}
		\bar{A}^{-1} \ket{0} \ket{b} =  	\ket{1} A^{-1} \ket{b}  \propto \ket{1, y}.
	\end{equation}
If $N=2^n$, then $H(s)$ defines a one-parameter family of Hamiltonians on $n+3$ qubits. Setting 
\begin{align}
    \ket{y(s)} \propto  \ket{0} \otimes A(s)^{-1} \ket{+,0,b},
\end{align} 
we have that the family of states $\{\ket{y(s)}\}$ are eigenstates in the nullspace of $H(s)$:
	\begin{equation}
		H(s) \ket{y(s)} = 0, \quad \textrm{for all} \; s \in [0,1].
	\end{equation}
	Moreover, for every $s$ the nullspace of $H(s)$ is $2$--dimensional, with $\{\ket{y(s)}, \ket{1, +,0, b}\}$ providing an orthonormal basis.
	The input $\ket{b}$ and normalized solution $\ket{y}$ to the linear system are encoded in eigenstates of zero eigenvalue of $H(s)$ via 
 \begin{align}
 \label{eq:initialandfinalstates}
     \ket{y(0)} = \ket{0,-, 0, b}, \quad \ket{y(1)} = \ket{0,+, 1, y}.
 \end{align} 

The intuition from the quantum adiabatic theorem is that if we start from a preparation of $\ket{y(0)}$ and evolve under the Hamiltonian $H(s)$ while changing the parameter $s$ sufficiently slowly, we output a good approximation to $\ket{y(1)}$.  This works even if the nullspace of $H(s)$ is a $2$-dimensional space spanned by $\{\ket{y(s)}, \ket{1,+,0,b}\}$, because the evolution under $H(s)$ does not cause any transition between the two orthogonal eigenstates in the nullspace.
	
More precisely, the adiabatic protocol proceeds in discrete steps as follows: we fix $\gamma \in (0,1)$, where $1-\gamma$ encodes the fidelity to which the adiabatic protocol tries to prepare $\ket{y(1)}$.  Given $\kappa$ and $\gamma$ and a small interval $d s$ around $s$, we generate a `dephasing event' at $s$ with a probability $\lambda(s) d s$, i.e., according to a Poisson process whose rate $\lambda(s)$ is larger  where the Hamiltonian gap is smaller. We take the general form~\cite{cunningham2024eigenpath} 
    \begin{align}
        \lambda (s) = \frac{C(\gamma)}{ \Delta(s)^{q} \Delta_{\mathrm{min}}^{1-q}},
        \label{eq:Poissonrate}
    \end{align} where $\Delta (s)$ is a lower bound on the gap between the zero and nonzero energies of $H(s)$ and $\Delta_{\mathrm{min}} = \min_{s \in [0,1]} \Delta(s)$. This is different from Ref.~\cite{subacsi2019quantum}, where a set of $s_j$ was predetermined as a function of $\kappa, \gamma$. We optimize the schedule and find that $q=1/2$ gives the best upper bounds, so we make this choice from now on.
    We also find that  $C=68.6$ suffices for $\gamma =1/2$. We can take\footnote{Note that in \cite{subacsi2019quantum} our $\Delta(s)$ is denoted by $\sqrt{\Delta(s)}$ instead.}
	\begin{align}
    \nonumber
		\Delta(s) & = \sqrt{(1-s)^2 + (s/\kappa)^2}, \quad
            \Delta_{\mathrm{min}} & =
 (1+\kappa^2)^{-1/2}.	
\end{align}  
For detailed derivations, see Sec.~\ref{app:groundspace}.

\bigskip
\textbf{Step 3: Dephasing events via randomized walk method.} For each $s_j$ associated to a dephasing event, we want to realize a dephasing in the eigenbasis of $H(s_j)$. More formally, the aim is to effect a quantum channel~\cite{boixo2009eigenpath}
	\begin{equation}
		\label{eq:Pj+1}
		\mathcal{P}_{j}(\rho) = P(s_j) \rho P(s_j) + \mathcal{E}_{j} \circ (I- P(s_j)) \rho (I- P(s_j)),
	\end{equation}
	where $P(s_j)$ is the projector onto the nullspace of $H(s_{j})$ and $\mathcal{E}_j$ is a channel mapping the nonzero eigenspaces onto themselves. $\mathcal{P}_j$ acts as a non-selective measurement in the eigenbasis of $H(s_j)$,  evolving the instantaneous eigenstate from $\ket{y(s_{j-1})}$ to $\ket{y(s_j)}$ with sufficiently high probability. 
    
The original proposal was to  perform at each $s_j$ a randomized Hamiltonian simulation for a time $t_j$ sampled from a uniform random variable~\cite{subacsi2019quantum}. However, this realizes a channel of the form~\eqref{eq:Pj+1} only approximately and hence leads to overheads that require further analysis~\cite{chiang2014improved}. This can be avoided as follows. We can instead sample $t_j$ according to a probability distribution $p_{s_j}(t)$, and correspondingly perform Hamiltonian simulation for time $\alpha_{s_j} t_j$, where  $\alpha_{s_j}$ is the block-encoding rescaling factor for $H(s_j)$, which we shall discuss later. A channel of the form~\eqref{eq:Pj+1} is then realized up to an error given by the maximum of the characteristic function of $p_{s_j}(t)$ evaluated at the nonzero energy eigenvalues of $H(s_j)$~\cite{boixo2009eigenpath}.  Therefore, if one uses a $p_{s_j}(t)$ that is bandwidth limited to $[-\Delta(s_j), \Delta(s_j)]$, the error vanishes and we realize a channel of the form $\mathcal{P}_j$ exactly. We need to find a probability distribution with this property that minimizes $\langle t_j \rangle$.\footnote{One could also consider jointly optimizing some combination of $\langle t_j \rangle$ and $\langle t_j^2 \rangle$, to put a penalty on excessive values of the variance. } Luckily, this problem has been studied before. Ref.~\cite{boixo2009eigenpath} proposed $p_{s_j}(t) \propto \textrm{sinc}^4(\Delta(s_j)t/4)$.  Ref.~\cite{sanders2020compilation}, in the context of combinatorial optimization algorithms, proposed a numerically optimized polynomial ansatz for the characteristic function of a probability distribution satisfying the properties we need, which returned a 46th-order polynomial (pages 35-36). 

Here, we simplify matters by explicitly constructing the probability distribution   \begin{equation}
		\label{eq:randomtime}
		p_{s_j}(t) \propto \frac{1}{\Delta(s_j)} \left(\frac{ J_{r}\left(\Delta(s_j) |t|/2\right)}{\Delta^{r-1}(s_j) |t|^r}\right)^2,
	\end{equation} 
	where $J_r(z)$ is the Bessel function of first kind of order $r=1.165$. This is obtained from an optimized quadratic order characteristic function, and has a value of $\langle t_j \rangle$ within $0.0022\%$ of the proposal in  Ref.~\cite{sanders2020compilation}. It has $\langle |t_j| \rangle = 2.32132/\Delta(s_j)$, and its variance is $9.36238/(\Delta(s_j))^2$. 

This leaves us with the requirement to perform, at each $s_j$, Hamiltonian simulation for a time $\alpha_{s_j} t_j$, with $t_j$ sampled from Eq.~\eqref{eq:randomtime}. The asymptotically best algorithms for Hamiltonian simulation are based on Quantum Signal Processing (QSP)~\cite{gilyen2018quantum} or Generalized QSP (GQSP)~\cite{motlagh2023generalized}. One drawback of this approach is that it requires us to compile phase factors for each $s_j$, involving potentially challenging classical computations. A second drawback is that the technique brings a constant factor overhead of at least $e/2$ in the query count from rigorous non-asymptotic estimates of the overhead in the degree of the Jacobi-Anger polynomial approximation of the complex exponential~\cite{gilyen2018quantum}.

   Here, instead, we replace Hamiltonian simulation by a random walk method, which removes both drawbacks. First, in Sec.~\ref{app:proofblockencodinghamiltonian} we present a construction realizing a self-inverse $(\alpha_s, a+2,0)$--block-encoding of $H(s)$ with $\alpha_s = (1-s) + \alpha s$ with a single call to $U_A$, using the Linear Combination of Unitaries (LCU) technique. The LCU construction of~Ref.~\cite{subacsi2019quantum} has $\alpha_s = 2(\alpha+1)$.  Since $\alpha_s $ enters linearly in $Q$, our construction cuts costs by about a factor of $2$. Then, we construct the walk operator
   \begin{align}
   \label{eq:walkoperator}
       W(s) = U_{H(s)} \mathcal{Z} U_{H(s)} \mathcal{Z},
   \end{align}
   where $\mathcal{Z} = (2 \ketbra{0^{a+2}}{0^{a+2}} - I)\otimes I$.
   This unitary can be written as $W(s) = e^{iH_W(s)}$, for some Hermitian, $H_W(s)$. The target adiabatic protocol is now realized by dephasing relative to the eigenspaces of $H_W(s)$, of which the $\pi$--eigenspace encodes the nullspace of $H(s)$. The operator $H_W(s)$ has a gap $\tilde{\Delta}_W(s)$ between $\pi$ and the remaining eigenvalues, which obeys $\tilde{\Delta}_W(s) \geq 2 \Delta(s)/\alpha_s$, as we show in Sec.~\ref{sec:randomizedwalkmethod}.

At each $s_j$, we shall apply $W(s)$ a number of times $m$ sampled from the probability distribution
  \begin{align}
        \label{eq:pjintegeresmain}
            p_{s_j}(m) \propto \frac{1}{\tilde{\Delta}_W(s_j)} \left( \frac{J_r(\tilde{\Delta}_W(s_j) |m|/2)}{\tilde{\Delta}_W^{r-1}(s_j) |m|^r} \right)^2, 
        \end{align}
        defined over the set $\mathbb{Z}$, and where $r = 1.165$.

We now have all the ingredients. We select $s_j$ according to the `Poisson random' process (with rate in Eq.~\eqref{eq:Poissonrate}), and for each $s_j$ we apply the walk operator $W(s_j)$ a number of times $m_j$ with probability in Eq.~\eqref{eq:pjintegeresmain}. For any given realization, we apply 
\begin{equation*}
W^{m_q}(s_q) W^{m_{q-1}}(s_{q-1}) \cdots W^{m_1}(s_1),
\end{equation*} 
on the input state. Averaging over the random variables we get a density operator $\rho(1)$ with 
	\begin{equation}
	\label{eq:idealrandomizedmethod}
		\bra{y(1)} \rho(1) \ket{y(1)} \geq 1/2.
	\end{equation}
    
	\bigskip
	\textbf{Step 4: Projection on the correct solution.} Next, we take the output of the adiabatic protocol and apply an $O(\epsilon)$-approximate projection onto the nullspace of $H(1)$, with a query cost $O(\kappa \log(1/\epsilon))$.

    Here we leverage the filtering results of Ref.~\cite{costa2022optimal}, which build upon the results in~\cite{lin2020optimal}. These results ensure that we can apply a block-encoding of $P(1)$, the projector onto the nullspace of $H(1)$ (the Hamiltonian at the end of the trajectory), 
	to the state $\rho(1)$ with a circuit involving on average $\left \lceil \frac{\alpha \kappa}{2} \log\left(\frac{2}{\epsilon_P}\right) \right \rceil$ applications of a block-encoding of $H(1)/\alpha$ or its inverse, and $2$ extra qubits. Since we have access to an $(\alpha, a+2,0)$ block-encoding of $H(1)$, this gives us an $(\alpha, a+4, \epsilon_P)$ block-encoding of $P(1)$. No complex phase factor pre-computations are required~\cite{costa2022optimal}. Counting $3$ extra qubits that we introduced in the constructions, overall we have used $7$ ancilla qubits.
 
An appropriate $\epsilon_P$ needs to be chosen by combining the errors from the adiabatic stage with those of the filtering stage and the corresponding success probabilities. We do this in Sec.~\ref{app:adiabaticfilteringerrors}. 
With this choice, if we succeed we output an $\epsilon$-approximation to the solution vector $\ket{y(1)}$ (and so $\ket{y}$).  Otherwise, we go back to Step 3 and repeat. 

    This completes an overview of the algorithm. We now move to a detailed analysis.

    \begin{summarybox}[label={workflow}] 
    {\bf Initialization}
\begin{enumerate}[leftmargin=*, itemsep=2pt, topsep=2pt]
\item Rescale the problem so that $\|A\|\leq 1$.
  \item Input circuit block-encoding of $A/\alpha$ for some $\alpha>1$.
  \item Input state preparation circuit for $\ket{b}$.
  \item Input error tolerance $\epsilon >0$.
  \item Input condition number upper bound $\kappa$.
  \end{enumerate}
   {\bf Hamiltonian encoding}
      \begin{enumerate}[leftmargin=*, itemsep=2pt, topsep=2pt]
  \item Define Hamiltonian encoding $H(s)$ as in Eq.~\eqref{eqn:adiabatic-H}, with block-encoding $U_{H(s)}$ constructed in Sec.~\ref{app:proofblockencodinghamiltonian}.
  \item Generate Poisson points $s_j$ in $[0,1]$ with rate $\lambda(s)$ according to Eq.~\eqref{eq:Poissonrate} with $\gamma =1/2$, $q=1/2$.
   \end{enumerate}
      {\bf Quantum algorithm}
     \begin{enumerate}[leftmargin=*, itemsep=2pt, topsep=2pt]
       \item Prepare $\ket{y(0)} = \ket{0,-,0,b}$ (Eq.~\eqref{eq:initialandfinalstates}).
       \item \emph{Quantum random walk:} for each Poisson point $s_j$, apply walk operator $W(s)$ in Eq.~\eqref{eq:walkoperator} a number of times $m$ sampled according to $p_{s_j}(m)$ in Eq.~\eqref{eq:pjintegeresmain}.
      \item  \emph{Filtering}: Apply quantum circuit plus measurement effecting a projection onto the nullspace of $H(1)$ as described in Sec.~\ref{app:filteringcost}.
      \item Repeat till success.
  \end{enumerate}

\end{summarybox}

\section{Algorithmic details}

	\subsection{Bringing the problem into standard form}
	\label{app:hermitianrescaling}

    \subsubsection{Rescaling}
	
	The central aim is to solve $A\v{y} = \v{b}$ in its coherent formulation $A\ket{y} \propto \ket{b}$. 
	It is useful to embed the linear system in another one for which the norm of the matrix of  coefficients is bounded by~$1$.
	
	Here we briefly discuss how to do so and why it does not affect the results of our work.   Let $N_A$ be an upper bound to the norm of $A$. Consider the rescaling $A' = A/N_A$. Clearly, $\| A' \| \leq 1$ and so all its singular values lie in the interval $[1/\|(A')^{-1}\|,1]$. We define $\kappa'=  \|(A')^{-1}\|$ so that all singular values of the rescaled matrix are included in $[1/\kappa' ,1]$. The parameter $\kappa'$ is an upper bound on the condition number of the rescaled matrix. 
	
	 In this paper, we assume access to an $(\alpha',a,0)$-block-encoding of $A'$. Note that without loss of generality we can assume $\alpha' \geq 1$. In fact, we can reabsorb $\alpha'<1$   into a redefinition of the normalization constant $N_A$.  Let us see why: Assume that the previous construction leads to a rescaled $A'$ with a block-encoding $U_{A'}$ with $\alpha' < 1$. This implies that $\|A'\|$ is strictly less than~1. Then define $A'' =  A'/\alpha'$.	The singular values of the	$A''$ matrix lie in the interval $[1/\kappa'',1]$, where $\kappa'' = \alpha' \kappa'$. In other words, the condition number upper bound of  $A''$ is a factor of $\alpha'$ smaller than that of $A'$. Furthermore, access to an $(\alpha',a,0)$-block-encoding of $A'$ is equivalent to access to an $(\alpha'',a,0)$-block-encoding of $A''$, where $\alpha''= 1$. 
     
     In other words, given $U_{A'}$ a block-encoding of $A'$, with a block-encoding rescaling parameter and condition number upper bound equal to $
	(\alpha', \kappa')$ respectively, then we have $U_{A'}= U_{A''}$
	a block-encoding for $A''$, with new parameters $
		(1, \alpha' \kappa')$.
	Since the query cost of QLSA is linear in the block-encoding rescaling and at least linear in the condition number, solving the linear problem for $A''$ is no more costly than solving the one for $A'$. Hence, this shows that we can take without loss of generality the rescaling factor of the block-encoding to be larger or equal to $1$. 

From now on, for simplicity of notation, we drop primes and assume the rescaling has been done. We have access to $U_A$ which is an $(\alpha,a,0)$ block-encoding of $A$, where $\alpha \geq 1$ and $\| A\| \leq 1$.

    \subsubsection{Hermitian extension}
    
We now consider the Hermitian extended linear system as in Eq.~\eqref{eq:Abardef}.
 We have $\| \bar{A} \| \leq 1$ and so all the singular values of $\bar{A}$ are between $[1/\bar{\kappa},1]$, where $\bar{\kappa}:=  \|\bar{A}^{-1}\|$. Furthermore, $\bar{\kappa} = \|\bar{A}^{-1}\| = \|A^{-1}\| = \kappa$, where we used that $A$ is invertible. Hence, the Hermitian extension does not change the bounds on the singular values. What does change is that  we need to block-encode $\bar{A}$ rather than $A$. As we shall discuss in more detail later, this can be done with at most one call to a block-encoding of $A$ and one call to a block-encoding of~$A^\dag$. If $A$ is Hermitian, the Hermitian extension is not required and a factor of $2$ is saved from the overall cost.

	\subsection{Properties of $H(s)$}
	\label{app:groundspace}

	For the results that follow we refer to  
 Ref.~\cite{subacsi2019quantum}. For clarity and ease we report what we need here.
	
	\subsubsection{The nullspace of $H(s)$} 
    We consider the family of Hamiltonians
	\begin{equation}
		H(s) = \ketbra{0}{1} \otimes A(s) \Pi + \ketbra{1}{0} \otimes \Pi A(s),
	\end{equation}
	with
	\begin{align}
             \Pi &= I - \ketbra{+,0,b}{+,0,b} \nonumber\\
		A(s) &= (1-s) Z \otimes I  + s X \otimes \bar{A},
	\end{align}
    where $Z$ is the Pauli $z$-matrix.
	We claim that for all $s$ the nullspace of $H(s)$ is $2$--dimensional and spanned by the orthonormal states $\ket{y(s)} \propto  \ket{0} \otimes A(s)^{-1} \ket{+,0,b}$ and $\ket{1,+,0,b}$. Let us prove that this is the case.
	
	Since $H(s)$ are Hermitian, the eigenvalues of $H(s)^2$ are exactly the square of the eigenvalues of $H(s)$ and they have the same eigenvectors. Hence consider
    \small \begin{align}
		H(s)^2 &= \ketbra{0}{0} \otimes A(s)\Pi A(s) + \ketbra{1}{1} \otimes \Pi A(s)^2 \Pi.
	\end{align}
    \normalsize
	From the above, the nullspace of $H(s)$ is spanned by the eigenvectors with zero eigenvalues of the two blocks of $H(s)^2$.
	
	The $\ket{0}$ block is spanned by vectors of the form $\ket{0} \otimes \ket{\psi}$. For this to be an eigenstate with eigenvalue zero we need
	\begin{equation}
    \label{eq:nullespacepropertiesderivation}
		(A(s)\Pi A(s)) \ket{\psi} = 0.
	\end{equation}

Let us show that $A(s)$ is invertible since $A$ is, by assumption, invertible. Compute

\begin{align}
\nonumber
    A^\dagger(s) A(s) &= (1-s)^2 I \otimes I + s^2 (I \otimes \bar{A}^\dag \bar{A}) \\ &=
     I \otimes I - s^2 ( I \otimes I - I \otimes \bar{A}^\dag \bar{A}),
\end{align} 
where we have used the facts that $\bar{A}$ is Hermitian and Pauli operators anticommute. 
Using Weyl's inequality, we can lower bound the smallest eigenvalue of $A^\dagger(s) A(s)$ with $(1-s)^2 + (s/\kappa)^2 \ge 1/\kappa^2 >0 $. Hence, $A(s)$ is invertible.

It follows that for every $\ket{\psi}$ there is a vector $\ket{\phi}$ such that $\ket{\psi} = A^{-1}(s) \ket{\phi}$. Then Eq.~\eqref{eq:nullespacepropertiesderivation} becomes
	\begin{equation}
		A(s) \Pi \ket{\phi} = 0,
	\end{equation}
	which holds if and only if $\ket{\phi} = \ket{+,0,b}$. Therefore, for all $s$ we have that $\ket{y(s)} \propto \ket{0}\otimes A^{-1}(s) \ket{+,0,b}$ is the unique eigenstate of zero eigenvalue of the top block. 
	
	We now consider the bottom block. The eigenstates with zero eigenvalue are of the form $\ket{1} \otimes \ket{\psi}$, where $\Pi A(s)^2 \Pi \ket{\psi} = 0$. Choosing $\ket{\psi} = \ket{+,0,b}$ gives an eigenstate with zero eigenvalue. Again, since $A(s)$ is invertible, we have that this state is unique.  In conclusion the nullspace of $H(s)$ is spanned by $\ket{y(s)} \propto \ket{0} \otimes A(s)^{-1} \ket{+,0,b}$ and $\ket{1,+,0,b}$. 
	
	\subsubsection{Solution encoding} 
    When $s=1$,  we have that $A(1)^{-1} = X \otimes \bar{A}^{-1}$ and so the associated nullspace vector is given by
	\begin{align}
		\ket{y(1)} \propto \ket{0} \otimes  (X \otimes \bar{A}^{-1}) \ket{+,0,b} \propto \ket{0, +, 1,y}.
	\end{align}
The state $\ket{y} \propto A^{-1} \ket{b}$ is the desired solution to the linear system.
	\subsubsection{Gap from zero}
    
 Now we consider the gap between the zero eigenvalue and the rest of the eigenvalues in $H(s)$, and show it is lower bounded by 
 \begin{align}
     \Delta(s) = \sqrt{(1-s)^2 + (s/\kappa)^2}.
 \end{align} 
 The gap between the zero eigenvalue and non-zero eigenvalues of $H(s)$ is simply the square root of the gap between the zero eigenvalue and non-zero eigenvalues for $H(s)^2$. If we set $B(s) = A(s) \Pi$, the top block of $H(s)^2$ reads $B(s) B(s)^\dag$ and the bottom block reads $B(s)^\dag B(s)$. The two blocks hence have the same spectrum and we can then focus only on the top block. It follows that the gap from zero of $H(s)$ is just the square root of the gap from zero of $A(s) \Pi A(s)$.
	
	The gap of $A(s) \Pi A(s)$ was analyzed in the Supplemental Material of Ref.~\cite{subacsi2019quantum} using Weyl's inequalities. One rewrites
	\begin{equation}
		A(s)\Pi A(s) = A(s)^2 - A(s) \Pi^\perp A(s),
	\end{equation}
	where $\Pi^\perp = \ketbra{+,0,b}{+,0,b}$.
	From Weyl's inequality the second smallest eigenvalue of $A(s)\Pi A(s) = A(s)^2 - A(s)\Pi^\perp A(s)$ (which is the gap from zero, since \mbox{$A(s)\Pi A(s)$} has a zero eigenvalue and is a nonnegative matrix) is lower bounded by the smallest eigenvalue of $A(s)^2$ plus the second smallest eigenvalue of $-A(s) \Pi^\perp A(s)$. We consider each separately.
	
	Note that $A(s)$ has eigenvalues $\pm \sqrt{(1-s)^2 + (s \lambda)^2}$, with $\lambda$ eigenvalues of $\bar{A}$. Recall that $A(s)$ is Hermitian, so the eigenvalues of $A(s)^2$ are of the form $(1-s)^2 + (s \lambda)^2$. Since $1/\kappa$ is a lower bound on the smallest singular value of $A$ by definition of $\kappa$, the smallest eigenvalue of $A(s)^2$ is lower bounded by $(1-s)^2 + (s/\kappa)^2$.

	Now consider $-A(s) \Pi^\perp A(s)$. This is a non-positive, rank-1 matrix, and so the second smallest eigenvalue is zero. We conclude that the gap of $A(s) \Pi A(s)$ is lower bounded by $(1-s)^2 + (s/\kappa)^2$, and so the gap from zero of $H(s)$ is lower bounded by $\Delta(s) = \sqrt{(1-s)^2 + (s/\kappa)^2}$. 
	
	\subsubsection{No transitions between orthogonal components of the nullspace.}
	We also claim that for any fixed $s$ the Hamiltonian  never couples the two eigenstates of zero eigenvalue. Specifically, we have that for every $s,s' \in [0,1]$
	\small
	\begin{align}
		&\bra{1} \bra{+,0,b} H(s) \ket{y(s')}  \nonumber \\ 
	 &	\hspace{-0.3cm} \propto \bra{1} \bra{+,0,b} (\ketbra{0}{1} \otimes A(s) \Pi + \ketbra{1}{0} \Pi A(s) ) \ket{0}  A^{-1}(s') \ket{+,0,b}\nonumber \\
	&	= \bra{+,0,b} \Pi A(s)A^{-1}(s') \ket{+,0,b} =0.
	\end{align}
	\normalsize
	Therefore, if we initialize the system in the state $\ket{y(0)}$ and perform the sequence of dephasings in the eigenbasis of the Hamiltonian $H(s)$, we will not generate a component along $\ket{1} \otimes \ket{+,0,b}$.

		\subsection{Block-encoding $H(s)$}
	\label{app:proofblockencodinghamiltonian}
	
	To construct the block-encoding of $H(s)$ from that of $A$ we rely on the linear combination of unitaries (LCU) technique. LCU allows us to obtain a $\left(\| \beta\|_1, \lceil \log_2 k \rceil , 0\right)$--block-encoding of $\sum_{i=1}^k \beta_i U_i$, where  $\|\beta\|_1= \sum_i \beta_i$, $U_i$ are unitaries and without loss of generality we assume $\beta_i > 0 $ by absorbing a phase in the definition of $U_i$~\cite{gilyen2018quantum, lin2022lecture}. LCU assumes access to the controlled unitary
	\begin{equation}
		V_S= \sum_{i=1}^k \ketbra{i}{i} \otimes U_i, \quad \quad  \textrm{\emph{``Select''}}
	\end{equation}
	and the state preparation unitary
	\begin{equation}
		V_P\ket{0^{{ \lceil \log_2 k \rceil}} }= \frac{1}{\sqrt{\|\beta\|_1}} \sum_{i=1}^k \sqrt{\beta_i} \ket{i}, \quad  \quad  \textrm{\emph{``Prepare''}},
	\end{equation}
	where we use the shorthand notation $\ket{0^a} :=\ket{0}^{\otimes a}$.
	The claimed block-encoding is then realized as 
	\begin{equation}
		W = (V_P^\dag  \otimes I)V_S (V_P \otimes I).
	\end{equation}
	
	Another general result concerns products of block-encodings. If $U_M$ is an $(\alpha_M, a_M, 0)$--block-encoding of $M$ and $U_N$ is an $(\alpha_N, a_N, 0)$--block-encoding of $N$, then $(I_{a_M} \otimes U_M)(I_{a_N} \otimes U_N)$ is an $(\alpha_M \alpha_N, a_M+a_N, 0)$--block-encoding of $MN$ (Ref.~\cite{gilyen2018quantum}, Sec. 4.4). 
	
	Using these results we can show the following, which as discussed improves the LCU decomposition of Ref.~\cite{subacsi2019quantum} by extending the discussion to a setting where $A$ is not necessarily Hermitian and sparse, while cutting the $1$-norm by at least a factor of $2$.
	
	\begin{lemma}[Constructing a block-encoding of $H(s)$]
		\label{res:blockencoding}
		Suppose we access an $(\alpha,a,0)$--block-encoding $U_A$ of $A$ and the state preparation unitary $U_b$ with 	$U_b \ket{0} = \ket{b}$.  We can then construct a unitary $U_{H(s)}$ that is an $$(\alpha_s= 1-s + \alpha s,a+2,0)$$
        block-encoding of 
			the Hamiltonian
            \begin{equation}
                H(s) = \ketbra{0}{1} \otimes A(s) \Pi + \ketbra{1}{0} \otimes \Pi A(s),
            \end{equation}
            where we have
		\begin{align*}
            \Pi &= I - \ketbra{+,0,b}{+,0,b} \\
			A(s) &= (1-s) Z \otimes I_{n+1}  + s X \otimes \bar{A}, \\
            \bar{A} &=  \ketbra{0}{1} \otimes A + \ketbra{1}{0} \otimes A^\dag,
		\end{align*}
 with $2$ calls to (controlled) $U_A$ or $U^\dag_A$ and $4$ calls to $U_b$ or~$U_b^\dagger$. Furthermore, the block-encoding is Hermitian, i.e., $U_{H(s)}^\dag = U_{H(s)}$. 
		 If $A$ is a $2^n\times 2^n$ matrix then $U_{H(s)}$ acts on $n+a+5$ qubits.
	\end{lemma}
	\begin{proof}
	 Given a block-encoding $U_A$ of $A$, $U^\dag_A$ block-encodes $A^\dagger$. Then
			\begin{equation}
				U_{\bar{A}} = (\ketbra{0}{0} \otimes U_A + \ketbra{1}{1} \otimes U_A^\dag) (X \otimes I),
			\end{equation}
			acts on $n+a+1$ qubits and
		gives an $(\alpha, a,0)$-block-encoding of $\bar{A}$ involving a single call to controlled $U_A$ and controlled $U_A^\dag$. Note that $U_{\bar{A}}$ is Hermitian. 
		
		Next, we
use LCU to obtain a block-encoding of $A(s)$ from the block-encoding of $\bar{A}$.  We do this by considering the expression
\begin{equation}\label{eqn:LCU-for-As}
(1-s) Z \otimes I_{n+a+1} + \alpha s X \otimes U_{\bar{A}},
\end{equation}
which defines an $n+a+2$ qubit operator which is a sum of unitaries and encodes $A(s)$ in the $\ket{0^a}$ block, with zero error.  This operator is block-encoded via LCU.
LCU requires a `select' unitary
	\begin{equation}
			V_S = \ketbra{0}{0} \otimes (Z \otimes I_{n+a+1}) + \ketbra{1}{1} \otimes (X \otimes  U_{\bar{A}}),
		\end{equation}
and a `prepare' unitary 
\begin{equation}
			V_P(s) \ket{0} = \frac{1}{\sqrt{1-s+\alpha s}} \left(\sqrt{1-s} \ket{0} + \sqrt{\alpha s } \ket{1}\right).
		\end{equation}
This introduces a single extra qubit to condition on to realize this linear combination, and has a scaling factor given by
\begin{equation}
\| (1-s, \alpha s)\|_1 = 1-s + \alpha s.
\end{equation}
The LCU unitary is given as
	\begin{equation}
U_{A(s)}:=  (V^\dag_P(s) \otimes I_{n+a+2}) V_S (V_P(s) \otimes I_{n+a+2}), 
\end{equation}
which is a unitary acting on $n+a+3$ qubits in total and provides a $(1-s+\alpha s, a+1, 0)$ block-encoding of the $n+2$ qubit operator $A(s)$. 

To construct a block-encoding of $H(s)$ we can combine the above block-encoding for $A(s)$ with a block-encoding of $\Pi$. In particular, we can block-encode \mbox{$\Pi = I -  \ketbra{+,0,b}{+,0,b}$} by writing
		\begin{align*}
			\Pi & = (\textrm{Had} \otimes U_b) (I - \ketbra{0^{n+2}}{0^{n+2}}) (\textrm{Had} \otimes U^\dag_b) \\ &  = \frac{1}{2}\left(I +  
			(\textrm{Had} \otimes U_b) e^{i \pi \ketbra{0^{n+2}}{0^{n+2}}} (\textrm{Had} \otimes U^\dag_b)\right)
		\end{align*}
		as a linear combination of two unitaries 
        \begin{align*}
            U_0 = I, \quad U_1 = (\textrm{Had} \otimes U_b) e^{i \pi \ketbra{0^{n+1}}{0^{n+1}}} (\textrm{Had} \otimes U^\dag_b),
        \end{align*}
        where $\textrm{Had}$ denotes the Hadamard matrix on one qubit, and in the second equality we used $ \ketbra{0^{n+1}}{0^{n+1}} = (I-e^{i \pi \ketbra{0^{n+1}}{0^{n+1}}})/2$ . We can then define the prepare and select unitaries
		\begin{equation}
			\tilde{V}_P = \textrm{Had}, \quad	\tilde{V}_S = \ketbra{0}{0} \otimes U_0 + \ketbra{1}{1} \otimes U_1.
		\end{equation} 
		The unitary $\tilde{W} = (\tilde{V}^\dag_P \otimes I_{n+1}) \tilde{V}_S (\tilde{V}_P \otimes I_{n+1})$ is then a $(1,1,0)$--block-encoding of $\Pi$,  using one call to $U_b$ and one to its inverse, which introduces one additional qubit. Hence from multiplying the block-encodings, we have that $U_{A(s)} \tilde{W}$ ($\tilde{W} U_{A(s)}$, respectively) is a $(1-s+\alpha s,a+2,0)$--block-encoding of $A(s) \Pi$ ($\Pi A(s)$, respectively) acting on $n+a+4$ qubits. 
		
		As a final step, note that
		\begin{align*}
			H(s) & = \ketbra{0}{1} \otimes A(s) \Pi + \ketbra{1}{0} \otimes \Pi A(s) \nonumber \\ & = (X \otimes I) \left[\ketbra{1}{1} \otimes A(s) \Pi + \ketbra{0}{0} \otimes \Pi A(s)\right],
		\end{align*}
		for $s \in [0,1]$, and so 
		\begin{equation}
			U_{H(s)}:=(X \otimes I) \left[\ketbra{1}{1} \otimes U_{A(s)} \tilde{W} + \ketbra{0}{0} \otimes \tilde{W} U_{A(s)}\right]
   \nonumber
		\end{equation}
is a $(1-s+\alpha s,a+2,0)$--block-encoding of $H(s)$ acting on a total of $n+a+5$ qubits, which is consistent with $H(s)$ acting on $n+3$ qubits in total. This occurs since we require $a+1$ auxiliary qubits for a block-encoding of $A(s)$ and one additional auxiliary qubit for the block-encoding of the projector.   Let us rewrite
			\begin{align}
				\nonumber
				U_{H(s)}&:=(X \otimes I)\left[   \ketbra{0}{0} \otimes  \tilde{W} + \ketbra{1}{1}  \otimes I \right] U_{A(s)}  \\ & \times  \left[\ketbra{0}{0} \otimes I + \ketbra{1}{1} \otimes \tilde{W} \right],
			\end{align}
		which shows that $U_{H(s)}$ can be obtained via a single call to $U_{A(s)}$ (i.e., a single call to controlled  $U_{\bar{A}}$, which is $1$ call to controlled $U_A$ and $1$ to controlled $U^\dag_A$) and $2$ calls to $\tilde{W}$ (i.e., $4$ calls to $U_b$ or $U^\dag_b$). 
		
		Finally, $U_{H(s)}^\dag = U_{H(s)}$ follows immediately from the fact that $U_{A(s)}^\dag = U_{A(s)}$ and $\tilde{W}^\dag = \tilde{W}$.

		Since $H(s)$ acts on 3 more qubits than $A$, the block-encoding $U_{H(s)}$ acts on $n+a+ 5$ qubits in total.
	\end{proof}
	Given this basic building block, we move on to the details of the randomized part of the algorithm.

     \subsection{Construction of a random variable for the randomization scheme}
     \label{sec:randomvariableconstruction}

     For the randomized protocol we shall need to construct a random variable, $p_{s}(t)$, for any $s\in [0,1]$ such that 
     \begin{itemize}
		\item The distribution $p_{s}(t)$ is band-limited: more precisely, its Fourier transform $\chi(\omega)$ has support in $[-\Delta(s), \Delta(s)]$.
		\item The random variable $T=t$ has $\langle |T|\rangle$ minimized.
	\end{itemize}
To construct $p_{s}(t)$, recall the definitions of Fourier transform $\mathcal{F}$ and its inverse $\mathcal{F}^{-1}$,
		\begin{align}
			\mathcal{F}(f)(\omega) &= \int_{-\infty}^{+\infty} e^{-i \omega t} f(t) dt, \\
			\mathcal{F}^{-1}(g)(t) &= \frac{1}{2 \pi} \int_{-\infty}^{+\infty} e^{i \omega t} g (\omega) d\omega.
		\end{align}
		To construct such a $p_{s}(t)$ with characteristic function supported only in $[-\Delta(s),\Delta(s)]$ we use the following construction strategy. We fix an ansatz characteristic function $g_s(\omega)$ with $g_s(0) = 1$ and supported only in $[-\Delta(s)/2,\Delta(s)/2]$. Its inverse Fourier transform gives a function $f_s(t)$ which will not in general be positive and is hence not a valid probability density. However we can set $p_{s} \propto f_s^2$ which, using the convolution theorem, has characteristic function $\chi_s = g_s \ast g_s$, where $\ast$ labels convolution. Hence, $\chi_s$ is supported in $[-\Delta(s),\Delta(s)]$, $p_{s}$ is positive and it is a valid probability density once normalized (if it can be normalized). These are the properties we needed to apply the techniques in Ref.~\cite{boixo2009eigenpath}.

		Let us see this strategy in action. Set
		\begin{equation}
			g_s(\omega) = \left(1- \frac{4 \omega^2}{\Delta(s)^2}\right)^{r-1/2} h\left(\frac{\omega}{\Delta(s)}\right),
		\end{equation}
		where we choose $r= 1.165$, $h(x)$ is the step function in $[-1/2,1/2]$, i.e., $h(x) = 1$ if $|x| \leq 1/2$ and $h(x) = 0$ otherwise (which we constructed by taking a quadratic expression with support in $[-\Delta(s)/2, \Delta(s)/2]$ and satisfying the constraints $g_s(\omega=0) =1$, $g_s(\omega=\Delta(s)/2)=0$). The support of $g_s$ is $[-\Delta(s)/2,\Delta(s)/2]$. The inverse Fourier transform reads
		\begin{equation}
			\mathcal{F}^{-1}( g_s(\omega)) \propto \frac{ J_{r}\left(\Delta(s) |t|/2\right)}{\Delta(s)^{r-1} |t|^{r}} : =f_s(t).
		\end{equation}
		Note that $f_s(t)$ is not a valid probability distribution since it can be negative, so we take $p_{s}(t) = f_s(t)^2/N_s$. It is readily verified that
		\begin{equation}
			\int_\mathbb{R} dt p_{s}(t) = 1.
		\end{equation}
		with normalization $ 0.2379128\dots  \times \Delta(s)$
		and therefore $p_{s}(t)$ is a valid probability  density on the real line.  The associated characteristic function $\chi_s(\omega)$ is the Fourier transform of this distribution and using the convolution theorem we have
		\begin{align}
			\chi_s(\omega) &:=  \int_{-\infty}^{+\infty} e^{i \omega t} p_{s}(t) dt \propto \mathcal{F}(f_s^2)(\omega) \\ &=  \mathcal{F}(\mathcal{F}^{-1}(g_s)^2)(\omega) \propto (g_s \ast g_s)(\omega).
		\end{align}
		We therefore have that $\chi_s(\omega)$ only has support in the interval $[-\Delta(s), \Delta(s)]$,  and the above $p_{s}(t)$ is a valid probability density.  
        
		Technically to conclude $\sup_{\omega \in  [-\Delta(s), \Delta(s)]} |\chi_s(\omega) |= 0$ above we would need $\Delta(s)$ to be a \emph{strict} lower bound to the zero eigenvalue gap of $H(s)$. However, $\Delta(s)$ is only a lower bound to the gap. By continuity one can of course repeat the same construction as the above with a characteristic function $g$ with support in $\left[\frac{\Delta(s) + \delta}{2}, \frac{\Delta(s) - \delta}{2}\right]$ and then make $\delta$ arbitrarily small, getting the same results as presented here.

\subsection{Randomization method using the block-encoding walk operator}
\label{sec:randomizedwalkmethod}

In this section we introduce a novel randomization method based on the walk operator obtained from block-encoding of the Hamiltonian, as opposed to relying on time evolutions. Start from the block-encoding of $H(s)$:
\begin{align}
U_{H(s)}|0\rangle|\psi\rangle=|0\rangle \frac{H(s)}{\alpha_s}|\psi\rangle+|\tilde\perp(s)\rangle,
\end{align}
where $(\langle 0| \otimes \mathbb{I})|\tilde\perp(s)\rangle=0$ and $|\tilde\perp(s)\rangle$ is subnormalized. Here $\alpha_s= 1-s + \alpha s$ is the block-encoding prefactor we obtained in the construction of Lemma~\ref{res:blockencoding}. Let $\left|E_i(s)\right\rangle$ be an eigenstate of $H(s)$ with eigenvalue $E_i(s)$. Then
\small
\begin{align*}
U_{H(s)}|0\rangle\left|E_i(s)\right\rangle & =\frac{E_i(s)}{\alpha_s}|0\rangle\left|E_i(s)\right\rangle+\left|\tilde{\perp}_{i}(s)\right\rangle \\
& =\frac{E_i(s)}{\alpha_s}|0\rangle\left|E_i(s)\right\rangle+\sqrt{1-\left(\frac{E_i(s)}{ \alpha_s}\right)^{2}}\left|\perp_{i}(s)\right\rangle \\
&= \cos(\theta_i(s)) \ket{0} \ket{E_i(s)} + \sin(\theta_i(s)) \ket{\perp_i(s)},
\end{align*}
\normalsize
where $\ket{\perp_{i}(s)}$ is now normalized  and satisfies 
\begin{align}
\label{eq:orthogonalqubitization}
   ( \bra{0} \otimes I)\ket{\perp_i(s)} = 0
\end{align} 
Also, we have defined $\theta_i(s) = \arccos(E_i(s)/\alpha_s)$. Since $U_{H(s)}$ is Hermitian (Lemma~\ref{res:blockencoding}), the $\mathrm{span} \{ \ket{0} \ket{E_i(s)}, \ket{\perp_i(s)}\}$ is an invariant subspace for $U_{H(s)}$~\cite{lin2022lecture}.

Let $\mathcal{Z}=(2\ket{0}\bra{0}-I)\otimes I$. Then it can be shown that the `doubled' walk operator $W(s)=U_{H(s)} \mathcal{Z} U_{H(s)} \mathcal{Z}$ acts in 2-dimensional invariant spaces   $\mathcal{H}_i:=\mathrm{span} \{ \ket{0} \ket{E_i(s)}, \ket{\perp_i(s)}\}$ as 
\begin{align}
W(s)|_{\mathcal{H}_i} = \left(\begin{array}{cc}
\cos \left(2 \theta_{i}(s)\right) & -\sin \left(2 \theta_{i}(s)\right) \\
\sin \left(2 \theta_{i}(s)\right) & \cos \left(2 \theta_{i}(s)\right)
\end{array}\right).
\end{align}
 The linear system Hamiltonian $H(s)$ of Eq.~\eqref{eqn:adiabatic-H} has a special spectrum. Its eigenvalues come in pairs of equal magnitude but opposite sign, except for two degenerate eigenvalues that are zero. We will label the positive eigenvalues in an increasing order with positive integers $j$ and the corresponding negative eigenvalues with $-j$. As for the degenerate eigenvalues 0 we use the notation
$\ket{E_0^+(s)}:=\ket{y(s)}$ and $\ket{E_0^-}:=\ket{1}\ket{+,0,b}$ as basis vectors to span this subspace.

Note that $\ket{E_0^+(1)}$ encodes the solution to the linear systems problem, $\ket{E_0^+(s)}$ is the eigenpath we wish to follow and the orthogonal state  $\ket{E_0^-}$ has no $s$-dependence. We have $E^\pm_0(s)=0$ and so $\theta_0(s)=\arccos(0)=\pi/2$ and hence $W(s)$ acts as minus identity in both of the 2-dimensional subspaces associated with the zero eigenvalues of $H(s)$. 

This shows that the `doubled' walk operator leaves the the zero energy space (corresponding to $\theta_0 = \pi/2$) invariant. However, we also need to verify that no other energy eigenspace ends up being associated to the same eigenphase of the walk operator. This requires every other eigenphase of the walk operator to be separated from $2\theta_0=\pi$ by a finite phase gap.

Since $E_i(s) / \alpha_s \in [-1,1]$, $\theta_i(s)=\arccos(E_i(s) / \alpha_s) \in (-\pi,\pi]$. Also, for $i \neq 0$, $|E_i(s)|/\alpha_s >\Delta(s)/\alpha_s$, and so   
\begin{align}
    \theta_i(s) \in (0, \pi/2 - \delta(s)) \cup (\pi/2 + \delta(s), \pi), 
\end{align} 
where $\pi/2 \mp \delta(s) = \arccos(\pm \Delta(s)/\alpha_s)$, and so $\delta(s) = \pi/2 - \arccos(\Delta(s)/\alpha_s) \geq \Delta(s)/\alpha_s$. This implies that the spectrum of $W(s)$ around the phase $\pi$ has a gap of 
\begin{align}
    \Delta_{W}(s) = \pi - 2\arccos(\Delta(s)/\alpha_s) \geq 2 \Delta(s) / \alpha_s.
\end{align} 
 Note this is tight only if $\Delta(s) \ll \alpha_s$.
We now note that for any $s\in [0,1]$ we can write W(s) = $e^{iH_W(s)}$ for some (non-unique) Hermitian operator $H_W(s)$. For any such $H_W(s)$ we see that $\Delta_W(s)$ is the gap between the $\pi$--eigenvalue of the operator $H_W(s)$ and its other eigenvalues.

The $\pi$--eigenspace of $H_W(s)$ is  a 4-dimensional subspace $\mathcal{H}_W^i$ with basis
\begin{align}
    \ket{\eta_0(s)} &= \ket{0}\ket{E_0^+(s)}, \\
    \ket{\eta_1} &= \ket{0}\ket{E_0^-}, \\
    \ket{\eta_2(s)} &= \ket{\perp_0^+(s)}, \\
    \ket{\eta_3(s)} &= \ket{\perp_0^-}, 
\end{align}
and the walk operator can be written as 
\begin{align}
    W(s) & = \bigoplus_{i\neq 0} \left(\begin{array}{cc}
\cos \left(2 \theta_{i}(s)\right) & -\sin \left(2 \theta_{i}(s)\right) \\
\sin \left(2 \theta_{i}(s)\right) & \cos \left(2 \theta_{i}(s)\right)
\end{array}\right)_{\mathcal{H}_i} \nonumber \\ & - \sum_{r=0}^3 \ket{\eta_r(s)}\!\bra{\eta_r(s)},
\end{align}
where $\ket{\perp_0^+(s)}$ is the vector defined in Eq.~\eqref{eq:orthogonalqubitization} to be orthogonal to $\ket{E_0^+(s)}$, and similarly $\ket{\perp_0^-}$ for $\ket{E_0^-}$.

Each realization of the randomization method implements a unitary of the form
\begin{align}
    W(s_k)^{m_k}\cdots W(s_2)^{m_2}  W(s_1)^{m_1}, 
\end{align}
where $k$, $\{s_j\}_{j=1}^{k}$, and $\{m_j\}_{j=1}^k$ are random variable samples, as discussed below.  

We could at this point use results from Ref.~\cite{boixo2009eigenpath} for the randomization method, but we want to improve further on it, so we will add randomization over the choice of $s_j$ values via Ref.~\cite{cunningham2024eigenpath}. In Ref~\cite{cunningham2024eigenpath}, one considers a Poisson process with rate $\lambda(s)$. At each jump point, $s_j$, a randomized time-evolution with respect to the Hamiltonian $H(s)$ is applied, as originally proposed in Ref.~\cite{subacsi2019quantum}, with the evolution time, $t_j$, chosen from a probability density~$p_{s_j}(t)$.

Here, we instead put forward a new scheme that entirely removes the Hamiltonian simulation routine and directly applies the walk operator $W(s_j)^{m_j}$, where $m_j$ is sampled according to the \emph{discrete} probability
        \begin{align}
    \label{eq:pjintegeres}
            p_{s_j}(m) = \frac{1}{\mathcal{N}(s_j) \tilde{\Delta}_W(s_j)} \left( \frac{J_r(\tilde{\Delta}_W(s_j) |m|/2)}{\tilde{\Delta}_W^{r-1}(s_j) |m|^r} \right)^2, 
        \end{align}
        with $m\in \mathbb{Z}$ an integer, $\tilde{\Delta}_W(s_j)$ any strict lower bound on $\Delta_W(s_j)$, and $r = 1.165$.
Note that
\begin{equation}
    W(s_j)^{m_j} = e^{im_j H_W(s_j)},
\end{equation}
and so amounts to an evolution under the Hamiltonian $H_W(s_j)$ for an integer-valued time $m_j$. This then allows us to realize dephasing onto the $\pi$--eigenspace of $H_W(s)$ via the randomized method applied to the walk operator.

The characteristic functions of the continuous version of the above discrete probability (where $m$ is extended to $m \in \mathbb{R}$, and the function renormalized accordingly) has support contained in $(-\Delta_W(s_j), \Delta_W(s_j))$ (see Sec~\ref{sec:randomvariableconstruction}), with $\Delta_W(s_j) \leq \pi$. By Lemma~4 of Ref.~\cite{boixo2009eigenpath}, its restriction to the integers (which is the discrete probability in Eq.~\eqref{eq:pjintegeres} that we are sampling from) is a well-defined probability distribution with characteristic function contained in $(-\Delta_W(s_j), \Delta_W(s_j))$. Using Theorem~1 of the same paper, the randomized protocol described
effects a channel on system plus ancilla of the form
\begin{align}
    \rho \mapsto \bar{P}_W(s_j) \rho \bar{P}_W(s_j) + \bar{\mathcal{C}}_{s_j} \circ (I- \bar{P}_W(s_j)) \rho (I- \bar{P}_W(s_j)),
\end{align}
where $\bar{P}_W(s_j)$ is the projector onto the subspace $\mathcal{H}^j_{W}$, and $\bar{\mathcal{C}}_{s_j}$ is a channel that maps the set of states with support entirely in the orthogonal subspace to $\mathcal{H}^j_{W}$ into itself. The expected number of applications of the walk operator is
\begin{align}
    \sum_{m=-\infty}^{+\infty} |m| p_{s_j}(m) \leq 2.322/\tilde{\Delta}_W(s_j),
\end{align}
where the bound has been numerically computed. Each application of the walk operator (or its inverse $W(s)^{-1}$ in the case of negative integer powers) corresponds to $2$ applications of the block-encoding of $H(s)$, and so $4$ applications of the basic unitaries $U_A$ and $U^\dag_A$. Using $\Delta_W(s) \geq 2 \Delta(s)/\alpha(s)$ we can upper bound the expected query cost of a randomization as
\begin{align}
\label{eq:querycostats}
Q(s_j) & \leq 4\times 2.322/ \Delta_W(s_j) \nonumber \\
& \leq  2\times 2.322 \alpha(s_j)/\Delta(s_j).
\end{align}

We now show that such an operation cannot cause transitions from $\ket{\eta_0(s_j)}$ to $\ket{\eta_r (s_{j+1})}$ for $r=1,2,3$. First 
\begin{align*}
\bra{\eta_1(s_{j+1})}W(s_{j+1})\ket{\eta_0(s_j)} &= - \braket{\eta_1(s_{j+1})}{\eta_0(s_j)} \\
    &= - \braket{0}{0}\braket{E_0^-}{E_0^+(s_j)} \\
    &= 0.
\end{align*}
Similarly for the other two eigenstates we have
\begin{align*}
    \bra{\eta_{2,3}(s_{j+1})}W(s_{j+1})\ket{\eta_0(s_j)} &= - \braket{\eta_{2,3}(s_{j+1})}{\eta_0(s_j)} \\
    &= - \bra{\perp_0^\pm(s_{j+1})}(\ket{0}\ket{E_0^+(s_j)}) \\
    &= 0,
\end{align*}
where in the last line we used the fact that $\bra{\perp_0^\pm(s_{j+1})}(\ket{0}\otimes I)=0$. The fact that no transitions out of  $\{ \ket{\eta_0(s)}, \ket{\eta_1}\}$ can occur during the randomized evolutions means that 
if we are promised that we initialize the system at $s=0$ in the state $\ket{\eta_0(0)} = \ket{0} \ket{E_0^+(0)}$, at every subsequent randomization, then the action of the previous channel has an identical action to the channel
\begin{align}
    \rho \mapsto P(s_j) \rho P(s_j) + \mathcal{C}_{s_j} \circ (I- P(s_j)) \rho (I- P(s_j)),
\end{align}
where $P(s_j) = \ketbra{0}{0} \otimes P_{H(s)}$, with $P_{H(s)}$ the projector onto the zero eigenspace of $H(s)$ and $\mathcal{C}_{s_j}$ is a channel that preserves orthogonality as stated before.

\subsection{Improved Poisson protocol analysis}

Following Ref.~\cite{cunningham2024eigenpath}, we set up a stochastic differential equation for the combined randomization over $s$ (locations of the randomization) and $m$ (number of applications of the walk operator at a given location, sampled according to \eqref{eq:pjintegeres}), where the latter is described by a Poisson process $\mathcal{N}$. Then 
\begin{align}
    d \hat{\rho} = ( W^m(s) \hat{\rho} W^{\dag \, m}(s) - \hat{\rho}) d \mathcal{N}.
\end{align}
Denote by $\mathbb{E}_{\mathcal{N},m}$ the average over both $\mathcal{N}$ and the random variable associated to~$m$, and $\mathbb{E}_m$ the average over the random variable associated to~$m$. The Poisson process is independent of the randomization over $m$, so 
\begin{align}
\nonumber
    d\rho := \mathbb{E}_{\mathcal{N},m}[    d \hat{\rho}] 
    &= \mathbb{E}_m[ W^m(s) \hat{\rho} W^{\dag \, m}(s) - \hat{\rho}]\lambda(s) ds, \\
    & = (P(s) \rho P(s) \nonumber \\ &  + \mathcal{C}_{s} \circ (I- P(s)) \rho (I- P(s)) - \rho) \lambda(s) ds,
    \nonumber
\end{align}
where we recall that $\lambda(s)$ is the rate of the Poisson process. So, we obtain the differential equation
\begin{align}
\label{eq:Evolutionofrho}
    \frac{d\rho}{ds} = (P(s) \rho P(s) + \mathcal{C}_{s} \circ  ( P(s)^\perp \rho P(s)^\perp) - \rho) \lambda(s),
\end{align}
where $P(s)^\perp = I- P(s)$. We set
$$\rho(0)= \ketbra{0}{0} \otimes \ketbra{E^+_0}{E^+_0}.$$
Recall that the target output state is $\ket{\eta_0(1)} = \ket{0} \ket{E^+_0(1)}$. The infidelity error $\gamma$ is given by
\begin{align}
    \gamma & = 1- \mathrm{Tr}[ \rho(1) P(1)] = \mathrm{Tr}[ \rho(0) P(0)] - \mathrm{Tr}[  \rho(1) P(1)] \nonumber \\ & = - \int_0^1 \frac{d}{ds}\left( \mathrm{Tr} [ \rho(s) P(s)]\right) ds.
    \label{eq:gammabound}
\end{align}
 Here, we used the fact that no transition to $\ket{E^-_0}$ happens during randomized evolutions, so we can replace $\ket{E_0^+(s)}\!\bra{E_0^+(s)}$ with $P(s)$.

Let us compute
\begin{align}
\nonumber
    \frac{d}{ds}\mathrm{Tr} [ \rho(s) P(s)] & =   \mathrm{Tr} \left [ \frac{d\rho(s)}{ds} P(s) \right ] +  \mathrm{Tr} \left [ \rho(s) \frac{dP(s)}{ds} \right ]
\\ \nonumber \! & \! \stackrel{\eqref{eq:Evolutionofrho}}{=} \lambda(s)\mathrm{Tr} [ P(s) \mathcal{C}_s \circ (P(s)^\perp \rho P(s)^\perp)] \nonumber \\ & +  \mathrm{Tr} \left[ \rho(s) \frac{dP(s)}{ds}\right] \nonumber
\\ & = \mathrm{Tr} \left[ \rho(s) \frac{dP(s)}{ds}\right],
\label{eq:Poissonproofintermediate1}
\end{align}
where in the last line we used that $$\mathrm{Tr} [ P(s) \mathcal{C}_s\circ (P(s)^\perp \rho P(s)^\perp)] = 0,$$ since $\mathcal{C}_s$ preserves orthogonality.

In Eq.~\eqref{eq:Poissonproofintermediate1}, $\rho(s)$ is not known, but we can solve Eq.~\eqref{eq:Evolutionofrho} for $\rho(s)$
\begin{align}
    \rho = - \frac{d\rho/ds}{\lambda} + P \rho P + \mathcal{C}\circ(P^\perp \rho P^\perp),
\end{align}
where we dropped the $s$ dependence to avoid notational clutter. Using this in Eq.~\eqref{eq:Poissonproofintermediate1}:
\begin{align}
\nonumber
     \frac{d}{ds}\mathrm{Tr} [ \rho P] &= - \mathrm{Tr}\left[\frac{dP}{ds}\frac{d\rho/ds}{\lambda}\right] + \mathrm{Tr}\left[\frac{dP}{ds} P \rho P\right] \nonumber \\ & + \mathrm{Tr}\left[\frac{dP}{ds} \mathcal{C} (P^\perp \rho P^\perp)\right] \nonumber
     \\ & = -\mathrm{Tr}\left[\frac{dP}{ds}\frac{d\rho/ds}{\lambda}\right].
     \label{eq:Poissonproofintermediate2}
\end{align}
The second and third term in the previous equation can be readily seen to vanish. For the second term:
\begin{align}
\nonumber
    \frac{dP}{ds} &= \frac{d (P P)}{ds} =     \frac{dP}{ds} P + P     \frac{dP}{ds}  \\ \nonumber
    & \Rightarrow P     \frac{dP}{ds} P = 0 \\ & \Rightarrow \mathrm{Tr}\left[ \frac{d P}{ds} P\rho P\right]= 0. 
\end{align}
And for the third term:
\small
\begin{align}
\nonumber
\mathrm{Tr}\left[\frac{dP}{ds} \mathcal{C} (P^\perp \rho P^\perp)\right] & =  \mathrm{Tr}\left[\left(\frac{dP}{ds} P + P \frac{dP}{ds} \right)\mathcal{C}  (P^\perp \rho P^\perp)\right]   \\
& =  \mathrm{Tr}\left[\frac{dP}{ds} P \mathcal{C} (P^\perp \rho P^\perp)\right]  \nonumber \\ & + \mathrm{Tr}\left[\frac{dP}{ds}  \mathcal{C} (P^\perp \rho P^\perp) P \right]  =0,
\end{align}
\normalsize
again since $\mathcal{C}$ preserves orthogonality.

Substituting Eq.~\eqref{eq:Poissonproofintermediate2} in Eq.~\eqref{eq:gammabound},
\small
\begin{align}
\nonumber
    \gamma & = \int_0^1 \mathrm{Tr}\left[\frac{dP}{ds}\frac{d\rho/ds}{\lambda}\right] ds \\ & =  \int_0^1 \frac{d}{ds}\left\{ \frac{1}{\lambda} \mathrm{Tr}\left[\frac{dP}{ds}\rho \right] \right\}ds - \int_{0}^1 ds \frac{d}{ds} \left( \frac{1}{\lambda} \right) \mathrm{Tr} \left[ \frac{dP}{ds}\rho\right] \nonumber \\
    & - \int_{0}^1 ds \frac{1}{\lambda} \mathrm{Tr} \left[ \frac{d^2P}{ds^2}\rho\right] \nonumber \\
   &  = \frac{1}{\lambda(1)} \mathrm{Tr}\left[\frac{dP}{ds}(1)\rho(1) \right] - \frac{1}{\lambda(0)} \mathrm{Tr}\left[\frac{dP}{ds}(0)\rho(0) \right] \nonumber \\ & - \int_{0}^1 ds \frac{d}{ds} \left( \frac{1}{\lambda} \right) \mathrm{Tr} \left[ \frac{dP}{ds}\rho\right] 
 - \int_{0}^1 ds \frac{1}{\lambda} \mathrm{Tr} \left[ \frac{d^2P}{ds^2}\rho\right] 
   \nonumber \\
   &  \leq \frac{1}{\lambda(1)} \left\|\frac{dP}{ds}(1) \right\| - \frac{1}{\lambda(0)} \mathrm{Tr}\left[\frac{dP}{ds}(0)\rho(0) \right]  \nonumber \\ & + \int_{0}^1 ds \left|\frac{d}{ds} \left( \frac{1}{\lambda} \right)\right|  \left\| \frac{dP}{ds}\right\| + \int_{0}^1 ds \frac{1}{\lambda} \left\| \frac{d^2P}{ds^2}\right\|. 
   \label{eq:gammaboundintermediate}
\end{align}
\normalsize
So far, we have essentially followed the derivations of Ref.~\cite{cunningham2024eigenpath} in bounding the error $\gamma$, albeit applied to a different setup involving random applications of the walk operators. The rest of the derivation improves over the previous results to return better constant factors.

We now simplify Eq.~\eqref{eq:gammaboundintermediate} by resolving its terms. Let's start from
\begin{align}
\label{eq:P}
    P(s) &= \ketbra{0}{0} \otimes P_{H(s)} \nonumber \\ & = \ketbra{0}{0} \otimes ( \ketbra{E_0^+(s)}{E^+_0(s)} + \ketbra{E^-_0}{E^-_0}). 
\end{align}
Hence, since $\ket{E^-_0}$ has no $s$-dependence,
\small
\begin{align}
\label{eq:Pderivative}
    \frac{dP(s)}{ds} = \ketbra{0}{0} \otimes \left(\frac{d}{ds}\ket{E_0^+(s)}\bra{E^+_0(s)}+\ket{E_0^+(s)}\frac{d}{ds}\bra{E^+_0(s)} \right). 
\end{align}
\normalsize
As shown in the proof of Lemma~\ref{lem:variancebound} below, by exploiting a phase freedom in the definition of $\ket{E^+_0(s)}$, we can make sure that  $\bra{E^+_0(s)} \frac{d}{ds} \ket{E^+_0(s)} =0$. Since $\rho(0)= \ketbra{0}{0} \otimes \ketbra{E^+_0}{E^+_0}$, it follows that the second term in Eq.~\eqref{eq:gammaboundintermediate} is zero:
\begin{align}
\nonumber
    \gamma & \leq \frac{1}{\lambda(1)} \left\|\frac{dP}{ds}(1) \right\| + \int_{0}^1 ds \left|\frac{d}{ds} \left( \frac{1}{\lambda} \right)\right|  \left\| \frac{dP}{ds}\right\| \nonumber \\ & + \int_{0}^1 ds \frac{1}{\lambda} \left\| \frac{d^2P}{ds^2}\right\|. 
\end{align}
Let's introduce two lemmas, bounding $\left\| \frac{dP}{ds}\right\|$ and $\left\| \frac{d^2P}{ds^2}\right\|$ in the expression above:
\begin{lemma}
\label{lem:variancebound}
    \begin{align}
    \label{eq:variancebound}
        \left\| \frac{dP}{ds} \right\| \leq \frac{c_1}{\Delta(s)}, 
    \end{align}
    where 
    \begin{align}
        c_1 = \sqrt{\mathrm{Var}_{\ket{E_0^{+}(s) }}\left[ \frac{dH(s)}{ds}\right]}
    \end{align}
    and for an operator $M$ we defined $$\mathrm{Var}_{\ket{\psi}}[M] = \langle M^2 \rangle - \langle M \rangle^2,$$ with the average taken with respect to $\ket{\psi}$. We find
    \begin{align}
        c_1 \leq \sqrt{2}.
    \end{align}
\end{lemma}
\begin{proof}
From Eq.~\eqref{eq:P} we have
   \begin{align}
       \left\| \frac{d P(s)}{ds} \right\| =    \left\| \frac{d P_{H(s)}}{ds} \right\|.
   \end{align}
Reasoning as in Eq.~\eqref{eq:Pderivative},
\begin{align}
\label{eq:derivativeofP}
    \frac{d P_{H(s)}}{ds} = \frac{d}{ds}(\ket{E^+_0(s)}) \bra{E^+_0(s)} + \ket{E^+_0(s)} \frac{d}{ds}(\bra{E^+_0(s)}). 
\end{align}
Note that $\frac{d P_{H(s)}}{ds}$ has no support on $\ket{E_0^-}$, so we can ignore this eigenstate. Correspondingly, for simplicity we use  the notation  $\ket{E_0^+(s)} \equiv \ket{E_0(s)}$. In this way, all relevant eigenvalues/eigenstates of $H(s)$  are labeled by $E_i$, $\ket{E_i}$. Since
\begin{align}
    H(s) \ket{E_i(s)} = E_i(s) \ket{E_i(s)}, 
\end{align}
we have
\begin{align}
    \frac{dH(s)}{ds} \ket{E_i(s)} + H(s) \frac{d}{ds} \ket{E_i(s)} \nonumber  \\ 
    = \frac{dE_i(s)}{ds} \ket{E_i(s)} + E_i(s) \frac{d}{ds} \ket{E_i(s)}.
\end{align}
So, for $i \neq 0$,
\begin{align}
\label{eq:E0Eielement}
    \bra{E_0(s)} \frac{d}{ds} \ket{E_i(s)} = \frac{\bra{E_0(s)} \frac{dH(s)}{ds} \ket{E_i(s)}}{E_i - E_0} := d_{0i}.
\end{align}
Also, $d_{i0} = - d^*_{0i}$.

For $i=0$ instead, without loss of generality we can choose
\begin{align}
\label{eq:phasechoice}
    \bra{E_0(s)} \frac{d}{ds} \ket{E_0(s)} =0, \quad \forall s.
\end{align}
Let's see that this is the case. $\ket{E_0(s)}$ is defined modulo a phase, $\ket{\bar{E}_0(s)} = e^{i \theta(s)} \ket{E_0(s)}$. Therefore,
\begin{align}
    \frac{d}{ds} \ket{\bar{E}_0(s)} = e^{i \theta(s)} i \frac{d\theta(s)}{ds} \ket{E_0(s)} + e^{i \theta(s)} \frac{d}{ds} \ket{E_0(s)} \; .
\end{align}
Hence,
\begin{align}
     \bra{\bar{E}_0(s)} \frac{d}{ds} \ket{\bar{E}_0(s)} & = e^{- i \theta(s)} \bra{E_0(s)} \frac{d}{ds} \ket{\bar{E}_0(s)} \nonumber \\ & = i \frac{d \theta(s)}{ds} + \bra{E_0(s)} \frac{d}{ds} \ket{E_0(s)}. 
\end{align}
So we can set 
\begin{align}
    \frac{d \theta(s)}{ds} = \bra{E_0(s)} i \frac{d}{ds} \ket{E_0(s)}, \quad \theta(0) = 0. 
\end{align}
We can then choose a phase such that Eq.~\eqref{eq:phasechoice} holds. And now we will drop the bar for simplicity.

Using Eq.~\eqref{eq:E0Eielement} and Eq.~\eqref{eq:phasechoice}:
    \begin{align}
    \frac{d P_{H(s)}}{ds} & = \sum_{i \neq 0} \left(\bra{E_i(s)} \frac{d}{ds}(\ket{E_0(s)}) \ketbra{E_i(s)}{E_0(s)} \right. \nonumber \\ & \left.+ \frac{d}{ds}(\bra{E_0(s)})\ket{E_i(s)}\ket{E_0(s)}\bra{E_i(s)} \right)
    \nonumber \\ & = 
    \sum_{i \neq 0} \left(- d_{0i} \ketbra{E_i(s)}{E_0(s)} +d_{0i} \ketbra{E_0(s)}{E_i(s)} \right).
    \nonumber
\end{align}
The singular values of the matrix $    \frac{d P_{H(s)}}{ds}$ are zero and
\small
\begin{align}
    &\sqrt{\sum_{i \neq 0} d_{0i}^2} \leq \frac{1}{\Delta(s)} \sqrt{\sum_{i \neq 0} \left| \bra{E_i(s)} \frac{dH(s)}{ds} \ket{E_0(s)} \right|^2 } 
  \nonumber  \\  
  &= \frac{1}{\Delta(s)} \sqrt{ \bra{E_0(s)}\frac{dH(s)}{ds} (I - \ketbra{E_0(s)}{E_0(s)}) \frac{dH(s)}{ds} \ket{E_0(s)}   }
   \nonumber \\  
   &= \frac{1}{\Delta(s)}\sqrt{\mathrm{Var}_{\ket{E_0^{+}(s) }}\left[\frac{dH(s)}{ds}\right]},
   \nonumber
\end{align}
\normalsize
since $\ket{E_0(s)} = \ket{E_0^+(s)}$ with the previous notation. The result in Eq.~\eqref{eq:variancebound} follows. For the second part
\begin{align}
 \left\| \frac{dP}{ds} \right\| &\leq \frac{1}{\Delta(s)} \sqrt{|\bra{E_0^+(s)} \left(\frac{d H(s)}{ds} \right)^2 \ket{E_0^+(s)}  |} \nonumber \\ 
 & \leq \frac{1}{\Delta(s)}
 \sqrt{|\bra{y(s)} \left(H(1)- H(0) \right)^2 \ket{y(s)}  |} \nonumber
 \\ & \leq \frac{1}{\Delta(s)}\| H(1) - H(0) \| \nonumber
 \\ & \leq \frac{\sqrt{2}}{\Delta(s)}. 
 \nonumber
\end{align}
To prove the last line, note that
\begin{align}
    \| H(1) - H(0)\| = \| \ketbra{0}{1} \otimes B + \ketbra{1}{0} \otimes B^\dag \|  
\end{align}
where $B = (Z \otimes I + X \otimes \bar{A}) \Pi$. Then,
\begin{align*}
    (\ketbra{0}{1} \otimes B + \ketbra{1}{0} \otimes B^\dag)^\dag (\ketbra{0}{1} \otimes B + \ketbra{1}{0} \otimes B^\dag) \nonumber \\ = \ketbra{0}{0} \otimes B B^\dag + \ketbra{1}{1} \otimes B^\dag B,
\end{align*}
which means that 
\begin{align}
     \| H(1) - H(0)\| = \| B \| \leq \| - Z \otimes I + X \otimes \bar{A} \|
\end{align}
Now,
\begin{align}
\nonumber
    ( - Z \otimes I + X \otimes \bar{A})^\dag ( - Z \otimes I + X \otimes \bar{A}) = I \otimes (I + \bar{A}^2)
\end{align}
is a matrix whose eigenvalues are bounded by $2$, because $\|\bar{A}^2\| \leq \| \bar{A}\|^2 \leq 1$. Hence $\|  - Z \otimes I + X \otimes \bar{A}\| \leq \sqrt{2}$ and so $ \| H(1) - H(0)\|  \leq \sqrt{2}$ immediately follows.
\end{proof}
This lemma strengthens the bound in  Lemma~3 of Ref.~\cite{cunningham2024eigenpath}, which gave $\left\| \frac{dP}{ds}\right\| \leq \frac{2 \sqrt{2}}{\Delta(s)}$. 

\begin{lemma}
\label{lem:seconderivariveP}
Assume $\frac{d^2 H}{ds^2} = 0$. Then
    \begin{align}
        \left\| \frac{d^2P}{ds^2} \right\| \leq  6  \frac{\| \frac{dH}{ds} \|^2}{\Delta^2(s)}. 
    \end{align}
\end{lemma}
For us, we have that $\| \frac{dH}{ds} \| \leq \sqrt{2}$, and so  
     \begin{align}
        \left\| \frac{d^2P}{ds^2} \right\| \leq \frac{12}{\Delta(s)^2}. 
    \end{align}
\begin{proof}
    Recall that from Eq.~\eqref{eq:P} $P(s) = \ketbra{0}{0} \otimes P_{H(s)}$
    First write 
    \begin{align}
        P_{H(s)} &= -\frac{1}{2\pi i} \oint_{\Gamma} R_z dz
    \end{align} 
    where $\Gamma$ is a circle of radius $\Delta(s)/2$ in the complex plane with center $0$, and the contour integral is taken in the anti-clockwise direction. We also have 
    \begin{align}
        R_z = (z I - H(s))^{-1}
    \end{align}
    being the resolvent. Using $\frac{d M^{-1}}{ds} = -M^{-1} \frac{dM}{ds} M^{-1}$, we have $\frac{d R_z}{ds} =  R_z \frac{dH}{ds} R_z$. With this
\small
\begin{align}
    \frac{d^2 P_{H(s)}}{ds^2} &= - \frac{1}{2\pi i} \oint_{\Gamma} \frac{d^2 R_z}{ds^2} dz \nonumber \\ & =  -\frac{1}{2\pi i} \oint_{\Gamma}  \left( 2 R_z  \frac{dH}{ds} R_z \frac{dH}{ds} R_z \nonumber + R_z  \frac{d^2H}{ds^2} R_z \right)dz \\ \nonumber
    &=  -\frac{1}{2\pi i} \oint_{\Gamma} \left( 2 R_z  \frac{dH}{ds} R_z  \frac{dH}{ds} R_z\right) dz  \\
    & = -2  G\left(   \frac{dH}{ds},  \frac{dH}{ds} \right),
\end{align}
\normalsize
where we used that in our setting $d^2 H(s)/ds^2 =0$ and we introduced
\begin{align}
  G( X,Y) =  \frac{1}{2\pi i} \oint_{\Gamma} \left(  R_z X R_z Y R_z\right) dz.
\end{align}
The operator $G$ was analyzed in Ref.~\cite{jansen2007bounds}. In particular, in the blocks defined by the images of $P_{H(s)}$ and $P_{H(s)}^\perp$, $G$ has the following structure (proof of Lemma~5 \cite{jansen2007bounds}):
\begin{align}
    &G\left(\frac{dH}{ds} ,
\frac{dH}{ds} \right) \nonumber \\ 
    = & P_{H(s)} B_1 P_{H(s)} + P_{H(s)} B_2 P_{H(s)}^\perp  \nonumber \\ 
    &- P_{H(s)}^\perp B_2 P_{H(s)} - P_{H(s)}^\perp B_1 P_{H(s)}^\perp,
\end{align}
 where $B_1= \widetilde{K}^2$, $B_2 = \widetilde{K \widetilde{K }} -\widetilde{\widetilde{K }  K }$, $K =  \frac{dH}{ds}$
and
\begin{align}
    \tilde{X} := \frac{1}{2\pi i } \oint_{\Gamma} R_z X R_z dz,
\end{align}
was called in Ref.~\cite{jansen2007bounds}
the `twiddle operation'. $dH/ds$ is Hermitian, and the twiddle operation preserves Hermiticity. 
Hence $B_1$ is Hermitian and $B_2$ is anti-Hermitian. $G$~is then Hermitian, so its norm just equals the largest absolute value of the eigenvalues. Compute
\begin{align}
\nonumber
    G v &= G P_{H(s)} v + G P_{H(s)}^\perp v \\
    \nonumber
    & = [P_{H(s)} B_1  - P_{H(s)}^\perp B_2] v_1  + [P_{H(s)} B_2  - P_{H(s)}^\perp B_1 ]v_2 \\
    & = P_{H(s)} (B_1 v_1 + B_2 v_2) - P_{H(s)}^\perp (B_2 v_1 + B_1 v_2)
\end{align}
where $v_1 = P_{H(s)} v$, $v_2 = P_{H(s)}^\perp v$ and $v$ is any unit vector. This is the sum of two orthogonal vectors, so the norm of $Gv$ is
\small
\begin{align}
     \| G v \|^2 & = \| P_{H(s)} (B_1 v_1 + B_2 v_2)\|^2 + \|P_{H(s)}^\perp (B_2 v_1 + B_1 v_2)\|^2 \nonumber \\
     & \leq \| (B_1 v_1 + B_2 v_2)\|^2 + \|(B_2 v_1 + B_1 v_2)\|^2
     \nonumber\\
    & = v_1^\dag B_1^2 v_1 + v_2^\dag B_1^2 v_2 - v_1^\dag B_2^2 v_1 - v_2^\dag B_2^2 v_2 \nonumber \\
    & + v_2^\dag (-B_2 B_1 +B_1 B_2) v_1 + v_1^\dag (B_1 B_2 -B_2 B_1)v_2.
\end{align}
\normalsize
If $b_1 = \|B_1\|$, $b_2 = \|B_2\|$, $x_1 = \|v_1\|$, $x_2 = \|v_2\|$, we get
\begin{align}
  \| G v\|^2 & \leq b_1^2 (x_1^2 + x_2^2) + b_2^2 (x_1^2 + x_2^2) + 4 b_1 b_2 x_1 x_2
    \end{align}
    The norm of the matrix is upper bounded by maximizing the last expression over all $x_1 \geq0 $, $x_2 \geq 0$ with \mbox{$x_1^2 + x_2^2 =1$}. Since $x_1 x_2 \leq 1/2$ on the unit circle
    \begin{align}
  \| G v \|^2 & \leq (b_1+b_2)^2.
    \end{align}
    Hence 
\begin{align}
    \|G \| & \leq \|B_1\| + \|B_2\| \nonumber \\
    & = \left\|\widetilde{K}\right\|^2 +  \left\| - \widetilde{K \widetilde{K }} + \widetilde{\widetilde{K}  K}\right\|.
\end{align} 
Using Lemma~7 in Ref.~\cite{jansen2007bounds},
\begin{align}
    \|\tilde{K}\| \leq \frac{\|K\|}{\Delta} = \frac{\left\| \frac{dH}{ds} \right\|}{\Delta}.
\end{align}

Putting all of this together we find
\small
\begin{align}
   \left\| \frac{d^2P}{ds^2} \right\|  & =  \left\| \frac{d^2P_{H(s)}}{ds^2} \right\| \nonumber \\
   &= 2 \| G \|  \nonumber\\
   & \leq 2 \left( \frac{\left\|\frac{dH}{ds} \right\|^2}{\Delta^2} +  \left\| \widetilde{\frac{dH}{ds} \widetilde{\frac{dH}{ds} }} \right\| + \left\|\widetilde{\widetilde{\frac{dH}{ds} }  \frac{dH}{ds} }\right\| \right)\nonumber\\ 
& \leq 2 \left(\frac{\left\|\frac{dH}{ds} \right\|^2}{\Delta^2} +   \frac{\left\| \frac{dH}{ds} \widetilde{\frac{dH}{ds} } \right\|}{\Delta} +  \frac{\left\|\widetilde{\frac{dH}{ds} }  \frac{dH}{ds} \right\|}{\Delta} \right) \nonumber\\
& \leq 2 \frac{\left\|\frac{dH}{ds} \right\|^2}{\Delta^2} + 4  \frac{\| \frac{dH}{ds} \| \| \widetilde{\frac{dH}{ds} } \|}{\Delta} \nonumber \\
& \leq 2 \frac{\left\|\frac{dH}{ds} \right\|^2}{\Delta^2} + 4  \frac{\| \frac{dH}{ds} \|^2}{\Delta^2} \nonumber \\
& = 6  \frac{\| \frac{dH}{ds} \|^2}{\Delta^2}.
\end{align}
\normalsize

In the proof of Lemma~\ref{lem:variancebound} we have seen $
    \left\| \frac{d H}{ds} \right\|  \leq \sqrt{2}$, so the final result follows immediately.
\end{proof}

Note that the bound above strengthens Lemma 3 in Ref.~\cite{cunningham2024eigenpath}, which gave $\left\| \frac{d^2 P}{ds} \right\| \leq 16/\Delta^2$.

With these estimates the error bound becomes
\small
\begin{align}
\nonumber
    \gamma & \leq \frac{\sqrt{2}}{|\lambda(1)| \Delta(1)} + \int_{0}^1 ds \frac{\sqrt{2}}{\Delta(s)} \left|\frac{d}{ds} \left( \frac{1}{\lambda} \right)\right|  + \int_{0}^1 ds \frac{12}{|\lambda| \Delta^2} \nonumber \\ & = (1) + (2) + (3) . 
\end{align}
\normalsize
Now, in our case $\Delta(s) = \sqrt{(1-s)^2 + (s/\kappa)^2}$. Furthermore, we shall take the rate
\begin{align}
\label{eq:lambdas}
    \lambda(s) = \frac{C}{\Delta(s)^q \Delta^{1-q}_{\textrm{min}}},  
\end{align}
where $\Delta^{1-q}_{\textrm{min}}$ is the minimal gap, which occurs at $s = (1+1/\kappa^2)^{-1}$ and equals
\begin{align}
    \Delta_{\textrm{min}} = \sqrt{\frac{1}{1+\kappa^2}}.
\end{align}
Now,
\begin{align}
  \left |  \frac{d}{ds} \frac{1}{\lambda} \right | = \frac{q \Delta^{q-1} |-1+s + s/\kappa^2|}{C \Delta}\Delta_{\textrm{min}}^{1-q}
\end{align}
and $\Delta(1) = 1/\kappa$, $\lambda(1) = \frac{C \kappa^q}{\Delta^{1-q}_{\textrm{min}}}$. So
\begin{align*}
    (1) &= \frac{\sqrt{2} \Delta^{1-q}_{\textrm{min}}}{C \kappa^{q-1}},
   \\  (2) & = \int_0^1 ds \frac{\sqrt{2} q \Delta(s)^{q-3}}{C} \Delta_{\textrm{min}}^{1-q} \left| -1+ s + \frac{s}{\kappa^2} \right|, \\ 
    (3) & = \int_0^1 ds \frac{12 \Delta(s)^{q-2} \Delta_{\textrm{min}}^{1-q}}{C}.
\end{align*}
 We find that taking $q=1/2$ gives the minimal error. We next establish a sufficient value for $C$ to ensure that $\gamma \le 1/2$.

For $q=1/2$ we have that
\begin{align*}
    (1) &= \frac{\sqrt{2} \Delta^{1/2}_{\textrm{min}} \kappa^{1/2}}{C},
   \\  (2) & = \frac{\sqrt{2}\Delta_{\textrm{min}}^{1/2} }{2C} \int_0^1 ds   \frac{\left| -1+ s + \frac{s}{\kappa^2} \right|}{\Delta(s)^{5/2}}, \\ 
    (3) & = \frac{12\Delta_{\textrm{min}}^{1/2} }{C}\int_0^1 ds \frac{1}{\Delta(s)^{3/2} }.
\end{align*}
We first note that $\Delta_{\textrm{min}} = 1/\sqrt{1+\kappa^2}$, and for the first term we have
\begin{equation}
    (1) = \frac{2}{C} \sqrt{\frac{\kappa}{1+\kappa^2}} \le \frac{\sqrt{2}}{C}.
\end{equation}
The second term can be written as
\begin{equation}
    (2) = \frac{\sqrt{2}\Delta_{\textrm{min}}^{1/2} }{2C} \int_0^1 ds   \frac{\left|\frac{d\Delta}{ds} \right|}{\Delta(s)^{3/2}},
\end{equation}
however $\frac{d\Delta}{ds} \ge 0$ for $s\in [s_{\min}, 1]$, and $\frac{d\Delta}{ds} < 0$
 for $s \in [0,s_{\min})$, where $s_{\min} = \kappa^2/(1+\kappa^2)$. This implies that
 \small
\begin{align}
     (2) &= \frac{\sqrt{2}\Delta_{\textrm{min}}^{1/2} }{2C}  \left ( -\int_0^{s_{\min}} ds   \frac{\frac{d\Delta}{ds} }{\Delta(s)^{3/2}} +\int_{s_{\min}}^1 ds   \frac{\frac{d\Delta}{ds}}{\Delta(s)^{3/2}} \right ) \nonumber \\
     &= \frac{\sqrt{2}\Delta_{\textrm{min}}^{1/2} }{2C}  \left ( 2\int_0^{s_{\min}} ds   \frac{d (\Delta^{-1/2})}{ds}  -2\int_{s_{\min}}^1 ds   \frac{d(\Delta^{-1/2})}{ds} \right ) \nonumber \\
     &= \frac{1 }{\sqrt{2}C} \frac{1}{(1+\kappa^2)^{1/4}}  \left (4 (1+\kappa^2)^{1/4} - 2( 1 + \sqrt{\kappa})\right ) \nonumber \\
     & \le \frac{\sqrt{2}}{C}.
\end{align}
 \normalsize
 Finally, the third term is given by
 \begin{equation}
     (3) =\frac{12\Delta_{\textrm{min}}^{1/2} }{C}\int_0^1 ds \frac{1}{((1-s)^2 + (s/\kappa)^2)^{3/4} }.
 \end{equation}
We want to upper bound the integral
\begin{align}
    I_\kappa &= \int_0^1 ds \frac{1}{((1-s)^2 + (s/\kappa)^2)^{3/4} } \, .
\end{align}
First we perform a change of variables 
 \begin{equation}
     \tilde{s} := \frac{1+\kappa^2}{\kappa} ( s - s_{\min}).
 \end{equation}
to obtain
\begin{align}
    \label{eq:t1}
    I_\kappa &= \frac{\kappa}{(\kappa^2 +1)^{1/4}} \int_{-\kappa}^{1/\kappa}d\tilde{s} \frac{1}{(1+\tilde{s}^2)^{3/4}} \, .
\end{align}
The integral on the RHS can be split into two integrals over $[-\kappa,0]$ and $[0,1/\kappa]$.  Changing variables $v=1/u$ we can rewrite the first integral as
\begin{align}
    \int_{-\kappa}^{0}d\tilde{s} \frac{1}{(1+\tilde{s}^2)^{3/4}} &= \int_{1/\kappa}^\infty dv v^{-1/2} \frac{1}{(1+v^2)^{3/4}} \\
    & \le \int_{1/\kappa}^\infty dv  \frac{1}{(1+v^2)^{3/4}} \, ,
\end{align}
where for the inequality we used the fact that $v^{-1/2}<1$ in the domain of integration. 
Combining the integrals over both domains we obtain
\begin{align}
    \int_{-\kappa}^{1/\kappa}d\tilde{s} \frac{1}{(1+\tilde{s}^2)^{3/4}} &\le \int_0^\infty d\tilde{s} \frac{1}{(1 + \tilde{s}^2)^{3/4}}\, , \\
    &= \frac{2 \sqrt{\pi} \Gamma(5/4)}{\Gamma(3/4)}\, ,
\end{align}
where $\Gamma(x)$ is the Gamma function. Putting this back into Eq.~\eqref{eq:t1} we obtain
\begin{align}
    \label{eq:t3}
    I_\kappa &\le \left( \frac{2 \sqrt{\pi} \Gamma(5/4)}{\Gamma(3/4)} \right) \frac{\kappa}{(\kappa^2+1)^{1/4}} \, \kappa^{1/2} \, .
\end{align}
which then implies
\begin{align}
          \label{eq:intbound}
        (3)  &  \le \frac{24 \sqrt{\pi} \Gamma(5/4) }{C \Gamma(3/4)}\, .
\end{align}
 Therefore, we have
 \begin{equation}
     \gamma \le (1)+(2) +(3) \le \frac{1}{C} \left (2 \sqrt{2} + \frac{24 \sqrt{\pi} \Gamma(5/4) }{\Gamma(3/4)} \right).
 \end{equation}
 Choosing $C \ge 68.59$ then implies that $\gamma \le 1/2$.
 
 The overall average query cost is then (recall Eq.~\eqref{eq:querycostats} and Eq.~\eqref{eq:lambdas})
 \small
\begin{align}
\int_0^1 \lambda(s) Q(s)  ds & \leq   2 \times 2.322 \alpha \int_0^1 \frac{\lambda(s)}{\Delta(s)}   ds   \nonumber  \\ &  \leq
318.6 \alpha (1+\kappa^2)^{1/4} \int_{0}^1 \frac{1}{[(1-s)^2 + (s/\kappa)^2]^{3/4}},
\nonumber
\end{align}
\normalsize
 where we use that $\alpha \ge \alpha(s)$ for all $s$ (not using this inequality gives little to no improvement). Using Eq.~\eqref{eq:t3}
\begin{align}
\int_0^1 \lambda(s) Q(s)  ds \leq 318.6 \frac{ 2 \sqrt{\pi} \Gamma(\frac{5}{4})}{\Gamma(\frac{3}{4})} \alpha \kappa \approx 835.4 \alpha \kappa
\end{align}
\normalsize

\subsection{Filtering cost}
	\label{app:filteringcost}
	
	We next consider applying $P(1)$ to the output state of the adiabatic component of the protocol, where $P(1)$ is the projector onto the nullspace of the Hamiltonian $H(1)$. A QSP implementation was given in Ref.~\cite{lin2020optimal}, but it requires phase factor precomputations.  
    
    We expand and formalize the analysis of Ref.~\cite{costa2022optimal}, proving stronger guarantees for their algorithm.

    \begin{lemma}[Filtering cost]
		\label{res:filteringcost}
		Let $U_{H}$ be an $(\alpha_H, m_H, 0)$ block encoding of a Hermitian operator $H$,  let $P$ denote the projection onto the nullspace of $H$ and $\Delta$ a lower bound on the gap between zero and the nearest non-zero eigenvalue. Then we can realize a $(1,m_H+2,\epsilon_P)$ block-encoding of $P$ 	with a number of calls to $U_{H}$, $U^\dagger_{H}$ equal to
		\begin{equation}
			l_P= \left \lceil \frac{\alpha_H}{\Delta} \ln \frac{2}{ \epsilon_P} +2 \right \rceil,
		\end{equation}
        with no QSP phase factor precomputation required.
	\end{lemma}

\begin{proof}
We consider a circuit PREP that prepares a  `window' state $\ket{\psi}$, given as
\begin{align}
    \ket{0} \mapsto \ket{\psi}:=\sum^l_{j=0} \sqrt{\psi_j} \ket{j}. 
\end{align}
The SELECT is
\begin{align}
    \sum_{j=0}^l \ketbra{j}{j} \otimes W_{2}^j,
\end{align}
where $W_2$ is the walk operator 
\begin{align}
    W_2 = U^\dag_H \mathcal{Z} U_H \mathcal{Z},
\end{align}
and $\mathcal{Z} = (2 \ketbra{0}{0} - I)\otimes I$ where $\ket{0}$ labels the block-encoding space. The overall protocol then involves the standard $\mathrm{PREP}^\dag \circ \mathrm{SELECT} \circ \mathrm{PREP}$ routine. The Hermitian operator $H$ is assumed to have eigendecomposition
\begin{equation}
    H = \sum_{m,s} \lambda_{m} \ketbra{\varphi_{m,s}}{\varphi_{m,s}},
\end{equation}
where $\{\ket{\varphi_m}\}$ is a normalized basis of eigenvectors of $H$ and $s$ is a degeneracy index. We label the nullspace by $m=0$, so that $\lambda_{0,s} = 0$.

The measurement operator induced on the system by finding the ancillas where the window state was prepared in the zero state is
\begin{align}
    R' &= (\bra{\psi} \otimes I) \mathrm{SELECT} (\ket{\psi} \otimes I) \\
     & = \sum_{j=0}^l \psi_j W_2^j. 
\end{align}
The walk operator can be written as (see Section~\ref{sec:randomizedwalkmethod}) 
\begin{align}
    W_2 = \sum_{m,s} e^{i 2 \phi_m} \ketbra{\xi^+_{m,s}}{\xi^+_{m,s}} + e^{-2 i \phi_{m}} \ketbra{\xi^-_{m,s}}{\xi^-_{m,s}},
\end{align} 
where 
\begin{align}
    \ket{\xi_{m,s}^{\pm}} & = \frac{1}{\sqrt{2}} (\ket{0} \ket{\varphi_{m,s}} \mp i \ket{\perp_{m,s}}), \\
   \phi_m & = \arccos(\lambda_m/\alpha),
\end{align}

 Let us define $\tilde{\psi}(\zeta) := \sum_{j=0}^l \psi_j e^{i \zeta j}$.
Then
\begin{align}
    R' & = \sum_m \tilde{\psi}(2\phi_m) \Pi^+_m + \tilde{\psi}(-2\phi_m)  \Pi^-_m,
\end{align}

with $  \Pi^\pm_m = \sum_s \ketbra{\xi^\pm_{m,s}}{\xi^\pm_{m,s}}$.
We also have that
\begin{align}
    R &= (\bra{0} \otimes I) R' (\ket{0} \otimes I) 
    \\ & = 
  \sum_m \frac{1}{2} (\tilde{\psi}(2\phi_m) + \tilde{\psi}(-2\phi_m) ) P_m,
\end{align}
where $P_m = \sum_s \ketbra{\varphi_{m,s}}{\varphi_{m,s}}$. 
For the nullspace $m=0$
we have
\begin{equation}
    2\phi_0 = 2\arccos(0) = \pi \; .
\end{equation}

 If we require $\tilde{\psi}(\zeta = \pm \pi) = 1$, it follows that
\begin{align}
    \| R - P\| & \leq \frac{1}{2} \max_{m \not= 0}  |  \tilde{\psi}( 2\phi_m) + \tilde{\psi}(-2\phi_m)| \; .
\end{align}
Following the same reasoning seen in Sec.~\ref{sec:randomizedwalkmethod}, $\phi_m$ takes values in $[0,\pi]$, with a gap around $\phi_0=\pi/2$ of at least $\Delta /\alpha$. 
So
\begin{align}
    \| R - P\| & \leq \frac{1}{2} \max_{\zeta \not\in [\pi - 2 \Delta/\alpha_H, \pi+ 2 \Delta/\alpha_H]}  |  \tilde{\psi}(\zeta) + \tilde{\psi}(-\zeta)|.
\end{align}

Following~\cite{costa2022optimal}, we choose $\psi_j$ to provide a Dolph-Chebyshev window function, with Fourier transform
\begin{equation}
    \tilde{\psi}(2\theta_m) = \epsilon_P T_l [ \beta \cos  ( 2\theta_m-\pi)],
\end{equation}
where  $\theta_m = \frac{\pi}{2}(1+ \frac{m}{l})$, with $m=-l, \dots, l$  and $T_l(x)$ are Chebyshev polynomials of the first kind.
Here $\beta$ is the trade-off parameter between main lobe and side lobe weights. We fix it as
\begin{equation}
    \beta = \cosh \left ( \frac{1}{l} \cosh^{-1} (\frac{1}{\epsilon_P}) \right ),
\end{equation}
where  $\epsilon_P$ is the required attenuation.  Then, with this choice we have, as required above,
\begin{align}
      \tilde{\psi}(\zeta = \pm \pi)=  \epsilon_P T_l [\beta] =1.
\end{align}
Furthermore, we want to set the width of the peak to coincide with the phase gap, which equals $2\Delta/\alpha_H$.
 
Since $T_l(1)=1$, we set the relation between $\beta $ and $\Delta $ to be
\begin{equation}
    \beta \cos (2\Delta/\alpha_H) = 1.
\end{equation}
Combining the above equations gives
\begin{equation}
    l = \left\lceil \frac{\cosh^{-1} (1/\epsilon_P)}{\cosh^{-1} (1/(\cos(2\Delta/\alpha_H)))} \right\rceil \le \frac{\alpha_H}{2\Delta} \ln \frac{2}{\epsilon_P} +1.
\end{equation}
 With this choice, 
\begin{align}
    \max_{\zeta \not\in [\pi - 2 \Delta/\alpha_H, \pi+ 2 \Delta/\alpha_H]}  |  \tilde{\psi}(\pm \zeta)| \leq \epsilon_P,
\end{align}
and so  $\| R - P\|  \leq \epsilon_P$.

However, each call to the walk operator involves one call to $U_H$ and one call to $U^\dagger_H$ and so $l_P = 2 l\leq  (\alpha_H/\Delta )\ln (2/\epsilon_P) +2 $, as claimed. The circuit implementation of the above is the same as in~\cite{costa2022optimal}, using a just-in-time unary encoding of the controlled walk operators, where the uncompute ($\mathrm{PREP}^\dagger$) of the window function state is inverted in its controls. The orderings of these uncompute control operations are then changed so that at most 2 qubits are ever required coherently in the control register.
\end{proof}
As highlighted in~\cite{costa2022optimal}, the use of the just-in-time method also means that mid-circuit measurements can reduce circuit costs by terminating the circuit earlier than a pure post-selection circuit.

    Here we apply this with $H=H(1)$, $\alpha_H = \alpha$, $\Delta = \Delta(1) = 1/\kappa$. 
    What is more, a factor of $2$ is saved by flagging a failure early.

\subsection{Total qubit count}
We now specify the total number of logical qubits needed for the algorithm. The following auxiliary qubits are required by the algorithm:
\begin{itemize}
    \item A total of $a$ auxiliary qubits for the block-encoding $U_A$ of the matrix $A$.
    \item A single qubit for the Hermitian extension $A \rightarrow \bar{A}$.
    \item A single qubit for $\bar{A} \rightarrow A(s)$.
    \item A single qubit for $A(s) \rightarrow H(s)$.
    \item Two extra qubits for block-encoding $H(s) \rightarrow U_{H(s)}$.
    \item Two extra qubits for performing the eigenspace filtering.
\end{itemize}
This gives a total number of auxiliary qubits being $a+7$. Therefore, with $A$ acting on $n$ qubits, the total number of logical qubits required for the algorithm is
\begin{equation}
    n_L = n+a+7,
\end{equation}
as claimed in the main text.
\subsection{Combining adiabatic and filtering errors}
	\label{app:adiabaticfilteringerrors}
	
	In this section we determine how the adiabatic error $\epsilon_{AD}$ combines with the filtering error $\epsilon_P$.     We will repeatedly use the following simple results:
\begin{lemma}
	\label{lem:1normbound}
	For any two square matrices $A$, $B$,
	\begin{equation}
		\|A B\|_1 \leq \min\{ \|A\|_1 \|B\| ,  \|A\| \|B\|_1\}.
	\end{equation}
\end{lemma}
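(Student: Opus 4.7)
The plan is to establish both inequalities inside the minimum by reducing them to Hölder's inequality for Schatten norms, namely $|\mathrm{Tr}(XY)| \le \|X\|_1 \|Y\|$. The two bounds are symmetric in the sense that proving $\|AB\|_1 \le \|A\|_1 \|B\|$ is immediate, and then taking adjoints and using $\|X\|_1 = \|X^\dagger\|_1$, $\|X\| = \|X^\dagger\|$ yields $\|AB\|_1 \le \|A\| \|B\|_1$.

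First, I would invoke the duality characterization of the trace norm,
\begin{equation}
\|X\|_1 = \sup_{\|Y\| \le 1} |\mathrm{Tr}(XY)|,
\end{equation}
which follows from the fact that the operator norm is the dual of the trace norm with respect to the Hilbert--Schmidt inner product. Applying this to $X = AB$ and using cyclicity of the trace gives
\begin{equation}
\|AB\|_1 = \sup_{\|Y\|\le 1} |\mathrm{Tr}(A(BY))|.
\end{equation}
Then Hölder's inequality $|\mathrm{Tr}(MN)| \le \|M\|_1 \|N\|$ applied with $M=A$, $N=BY$, combined with submultiplicativity of the operator norm $\|BY\|\le \|B\|\|Y\| \le \|B\|$, yields
\begin{equation}
\|AB\|_1 \le \sup_{\|Y\|\le 1} \|A\|_1 \|BY\| \le \|A\|_1 \|B\|.
\end{equation}

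For the other inequality, I would apply the bound just established to $B^\dagger A^\dagger$ in place of $AB$, using that both norms are invariant under taking adjoints. This gives
\begin{equation}
\|AB\|_1 = \|B^\dagger A^\dagger\|_1 \le \|B^\dagger\|_1 \|A^\dagger\| = \|B\|_1 \|A\|,
\end{equation}
so that $\|AB\|_1 \le \|A\|\|B\|_1$ as well. Combining the two bounds gives the minimum in the statement. The only nontrivial ingredient is Hölder's inequality for Schatten norms (together with the cyclicity/adjoint invariance facts), which is standard and could either be cited directly or proved in one line via the singular value decomposition of $A$; I would simply cite it, since there is no obstacle here beyond a clean appeal to textbook matrix-analysis facts.
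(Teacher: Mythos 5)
Your proof is correct and takes essentially the same approach as the paper: both appeal to the variational/dual characterization of the trace norm (yours via a supremum over $\|Y\|\le 1$, the paper's via a maximum over unitaries, which are equivalent) and then invoke H\"older's inequality for Schatten norms. The only cosmetic difference is that you obtain the second inequality by taking adjoints, whereas the paper applies H\"older again directly with the roles of $\|\cdot\|_1$ and $\|\cdot\|$ swapped.
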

\begin{proof}
	From the definition of trace-norm \cite{watrous2018theory},
	\begin{equation}
		\|A B\|_1 = \max_{U} |\langle AB, U \rangle|,
	\end{equation}
	where the maximization is over all unitaries and $\langle \cdot, \cdot \rangle$ is the Hilbert-Schmidt scalar product. Using H\"older's inequality for Schatten $p$-norms and their invariance under composition with unitaries we have
	\begin{equation*}
		|\langle AB, U \rangle| = |\langle A, U B^\dag \rangle| \leq \|A\|_1 \| U B^\dag \|  = \|A\|_1 \|B\|.
	\end{equation*}
	Similarly, we have that $|\langle A, U B^\dag \rangle| \leq \|A\|\| U B^\dag \|_1 =\|A\| \|B\|_1$.  Therefore we can take the smaller of the two results as the upper bound on $\|A B\|_1$, which completes the proof.
\end{proof}
    
    \begin{lemma}[Perturbing $1$-norm]
	\label{lem:stabilitynorm}
	Assume $\|\tilde{A} - A\| \leq \delta$, $\|\tilde{B} - B \| \leq \delta$. Then for every density operator $\rho$,
	\begin{equation}
		\|\tilde{A} \rho \tilde{B} - A \rho B \|_1 \leq \delta( \|A\| + \|B\|) + \delta^2.
	\end{equation} 
	\begin{proof}
		Using Lemma~\ref{lem:1normbound},
		\small
		\begin{align}
			&\|\tilde{A} \rho \tilde{B} - A \rho B \|_1 	\nonumber  \\ 
            &= \| (\tilde{A} - A)\rho (\tilde{B} - B) + (\tilde{A} - A) \rho B + A \rho (\tilde{B}- B) \|_1 
			\nonumber \\  
            &\leq \| (\tilde{A} - A)\rho (\tilde{B} - B)\|_1 + \|(\tilde{A} - A) \rho B\|_1 + \|A \rho (\tilde{B}- B) \|_1 \nonumber\\  
            &\leq  \|\tilde{A} - A\| \|\tilde{B} - B\| \|\rho\|_1  + \|\tilde{A} - A\| \| B\| \|\rho\|_1 \nonumber \\ 
            &+  \|\tilde{B} - B\| \| A\| \|\rho\|_1\nonumber \\  &\leq \delta^2 + \delta (\|A\| + \|B\|). \nonumber
		\end{align}
		\normalsize
		Which completes the proof.
	\end{proof}
\end{lemma}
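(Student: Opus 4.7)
The plan is to use a telescoping decomposition of the difference $\tilde{A}\rho\tilde{B} - A\rho B$, followed by the triangle inequality and the preceding Lemma~\ref{lem:1normbound}. First I would verify the algebraic identity
\[
\tilde{A}\rho\tilde{B} - A\rho B = (\tilde{A}-A)\rho(\tilde{B}-B) + (\tilde{A}-A)\rho B + A\rho(\tilde{B}-B),
\]
by direct expansion of the right-hand side; the unwanted cross-terms cancel. This cleanly separates the error into one term that is quadratic in the perturbations and two terms that are linear in them.

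Next I would apply the triangle inequality for the trace norm to bound $\|\tilde{A}\rho\tilde{B} - A\rho B\|_1$ by the sum of the trace norms of the three pieces. On each piece I would invoke Lemma~\ref{lem:1normbound} in the form $\|XYZ\|_1 \le \|X\|\,\|Y\|_1\,\|Z\|$ (obtained by applying the lemma twice), placing the trace norm on the single factor $\rho$ and operator norms on the two deterministic factors. Since $\rho$ is a density operator, $\|\rho\|_1 = 1$, so the hypotheses $\|\tilde{A}-A\|\le\delta$ and $\|\tilde{B}-B\|\le\delta$ yield
\[
\|(\tilde{A}-A)\rho(\tilde{B}-B)\|_1 \le \delta^2,\quad \|(\tilde{A}-A)\rho B\|_1 \le \delta\|B\|,\quad \|A\rho(\tilde{B}-B)\|_1 \le \delta\|A\|.
\]
Summing these three bounds gives precisely $\delta^2 + \delta(\|A\|+\|B\|)$, which is the stated inequality.

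There is no real obstacle here: the proof reduces to a one-line algebraic identity together with routine applications of the norm inequality just established. The only thing to be mindful of is choosing the right placement of the operator versus trace norms in Lemma~\ref{lem:1normbound} so that the factor $\|\rho\|_1 = 1$ is what appears, rather than $\|A\|_1$ or $\|B\|_1$, which would not be controlled by the hypotheses.
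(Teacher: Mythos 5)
Your proof is correct and follows essentially the same route as the paper: the same telescoping decomposition, triangle inequality, and two applications of Lemma~\ref{lem:1normbound} to place the trace norm on $\rho$ so that $\|\rho\|_1 = 1$ appears. No gaps.
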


    \begin{lemma}[Perturbing probabilities]
	\label{lem:perturbingprob}
	Let $A$, $\tilde{A}$ be square matrices satisfying $\|A - \tilde{A}\| \leq \delta$ and $\rho$ a density operator. Then 
	\begin{equation}
		\tr{\rho A^\dag A} -2 \delta \|A\| \le \tr{ \rho \tilde{A}^\dag \tilde{A}} \le  \tr{\rho A^\dag A} + 2 \delta \|A\|  +\delta^2,
	\end{equation}
\end{lemma}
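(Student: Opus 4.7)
The plan is to write $\tilde{A}=A+E$ with $E:=\tilde{A}-A$, so $\|E\|\le \delta$ by assumption. Expanding the operator of interest gives
\[
\tilde{A}^\dag \tilde{A} = A^\dag A + A^\dag E + E^\dag A + E^\dag E,
\]
and tracing against $\rho$ yields
\[
\tr{\rho \tilde{A}^\dag \tilde{A}} = \tr{\rho A^\dag A} + \tr{\rho A^\dag E} + \tr{\rho E^\dag A} + \tr{\rho E^\dag E}.
\]
The task then reduces to controlling the three perturbation terms on the right-hand side.

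For the two cross terms, I would apply H\"older's inequality in the form $|\tr{M N}|\le \|M\|_1 \|N\|$ together with cyclicity of trace, so that
\[
|\tr{\rho A^\dag E}| \le \|\rho\|_1 \|A^\dag E\| \le \|A\|\,\|E\| \le \delta\|A\|,
\]
and identically for $|\tr{\rho E^\dag A}|$. This is essentially a direct application of Lemma~\ref{lem:1normbound} together with $\|\rho\|_1=1$. The remaining term $\tr{\rho E^\dag E}$ is the expectation value of the positive semidefinite operator $E^\dag E$ in the state $\rho$, so $0\le \tr{\rho E^\dag E}\le \|E^\dag E\| \le \delta^2$.

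Assembling the pieces, the lower bound discards the (non-negative) $\tr{\rho E^\dag E}$ term and keeps only the two linear-in-$\delta$ perturbations, each contributing at most $\delta\|A\|$ in magnitude, yielding $\tr{\rho\tilde{A}^\dag \tilde{A}} \ge \tr{\rho A^\dag A} - 2\delta\|A\|$. The upper bound instead retains $\tr{\rho E^\dag E}$ and bounds it by $\delta^2$, giving $\tr{\rho\tilde{A}^\dag \tilde{A}}\le \tr{\rho A^\dag A} + 2\delta\|A\| + \delta^2$, exactly as claimed.

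There is no substantive obstacle here: the asymmetry in the statement (the $\delta^2$ appearing only in the upper bound) is automatically produced by the positivity of $\rho E^\dag E$'s trace, so one simply has to be careful to keep that term on one side only. The whole argument is a one-line expansion plus H\"older.
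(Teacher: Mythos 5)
Your proof is correct and follows essentially the same route as the paper: both set $E := \tilde{A}-A$ (the paper writes $R$), expand $\tilde{A}^\dag\tilde{A}$, bound the cross terms via H\"older's inequality with $\|\rho\|_1=1$, and use the nonnegativity of $\tr{\rho E^\dag E}$ together with $\tr{\rho E^\dag E}\le\|E\|^2\le\delta^2$ to get the asymmetric bounds.
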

\begin{proof}
	Let $\Delta A = \tilde{A} - A$. Then using H\"older's inequality and submultiplicativity
	\small
	\begin{align}
		\tr{ \rho \tilde{A}^\dag \tilde{A}} \nonumber & = \tr{\rho (A^\dag + \Delta A^\dag)(A+\Delta A)} \nonumber \\ & = \tr{\rho A^\dag A} + \tr{\rho (A^\dag \Delta A + \Delta A A^\dag}) \nonumber \\
        &+ \tr{\rho \Delta A^\dag \Delta A}.
	\end{align}	
	\normalsize
	The last term is always positive,  and therefore we have that
	\begin{align}
		\tr{\rho A^\dag A} - | \tr{\rho (A^\dag \Delta A + \Delta A A^\dag})| \le \tr{ \rho \tilde{A}^\dag \tilde{A}}  \nonumber \\  \le  \tr{\rho A^\dag A} + | \tr{\rho (A^\dag \Delta A + \Delta A A^\dag})| +\tr{\rho \Delta A^\dag \Delta A}.
	\end{align}
	Making use of the H\"older inequality $|\tr{A^\dagger B}|  \le \|A\|_1 \|B\|$ and submultiplicativity of the operator norm we have that
	\small
	\begin{align}
		\tr{\rho A^\dag A} -2 \delta \|A\| \le \tr{ \rho \tilde{A}^\dag \tilde{A}} \le  \tr{\rho A^\dag A} + 2 \delta \|A\|  +\delta^2,
	\end{align}
	\normalsize
	as required.

\end{proof}

	\begin{thm}[Adiabatic \& Filtering Errors]
		\label{res:overall}
		Consider the following protocol:
		\begin{enumerate}
			\item Apply the Poisson adiabatic protocol with $\gamma \ge 1/2$, outputting a state $\rho$.
            \item Apply an approximation $R$ of $P_{H(1)}$, where $P_{H(1)}$ is the nullspace projector for $H(1)$, as in Lemma~\ref{res:filteringcost}, and $\|R-P_{H(1)}\| \le \epsilon_P$.
		\end{enumerate}
        Let $\ket{y}$ denote the normalized state $\propto A^{-1} \ket{b}$ and $\tilde{\sigma}$ the state prepared on the same registers by the above protocol. We have 
		\begin{equation}
			\label{eq:errorfinal}
			\left\| \tilde{\sigma} - \ketbra{y}{y} \right\|_1  \leq  8\epsilon_P + 4\epsilon_P^2,
		\end{equation}
		with success probability $p_{succ}$ satisfying
		\begin{equation}
			\label{eq:successprobabilityfinal}
			p_{succ} \geq 	\frac{1}{2} - 2 \epsilon_P.
		\end{equation}
	\end{thm}
	
	\begin{proof}
 As we have shown, the nullspace of $H(1)$ is spanned by $\ket{0,+,1,y}$ and $\ket{1,+,0,b}$, however $\rho$ has zero overlap with $\ket{1,+,0,b}$, due to the form of the adiabatic protocol. Therefore, we have that
\begin{align}
\label{eq:equivalentform}
\frac{1}{\tr{P_{H(1)}\rho}} P_{H(1)} \rho P_{H(1)} =  \ket{0,+,1,y}\bra{0,+,1,y}.
\end{align}
Here, $\tr{P_{H(1)} \rho} \ge 1/2$, from the choice of adiabatic parameters.
        
However, for the second step we apply an approximate projector $R$ to $\rho$ to obtain the quantum state
        \begin{align}
            \frac{1}{\tr{R^\dagger R \rho}} R\rho R^\dagger.
        \end{align}
        The success probability of implementing this is $\tr{R^\dagger R \rho}$, but using Lemma~\ref{lem:perturbingprob} we have that
        \begin{equation}
        \label{eq:probbound1}
            \tr{R^\dagger R \rho} \ge \tr{P_{H(1)} \rho} - 2 \epsilon_P \ge \frac{1}{2} - 2\epsilon_P.
        \end{equation}
        We next bound the $L_1$ norm error in the output. We have from Lemma~\ref{lem:stabilitynorm} that
        \begin{align}
            \|R \rho R^\dagger - P_{H(1)} \rho P_{H(1)}\|_1 \le 2 \epsilon_P + \epsilon_P^2.
        \end{align}

		The distance between the output of the algorithm and the target is then bounded as
		\small \begin{align*}
			&\left\| \frac{R \rho R^\dag}{\tr{ R \rho R^\dag}} -  \frac{P_{H(1)} \rho P_{H(1)}^\dag}{\tr{ P_{H(1)} \rho}}\right\|_1 \leq \nonumber\\ 
				&\left\| \frac{R \rho R^\dag}{\tr{ R \rho R^\dag}} -  \frac{R \rho R^\dag}{\tr{\rho P_{H(1)}}} \right\|_1 + \left\|\frac{R \rho R^\dag}{\tr{\rho P_{H(1)}}} -  \frac{P_{H(1)} \rho P_{H(1)}}{\tr{  \rho P_{H(1)}}}\right\|_1.
		\end{align*}
        \normalsize
		We will bound the two terms separately. First:
		\begin{align}
			\left\| \frac{R \rho R^\dag}{\tr{ R \rho R^\dag}} -  \frac{R \rho R^\dag}{\tr{\rho P_{H(1)}}} \right\|_1 \\ = \|R \rho R^\dag\|_1 \left| \frac{1}{\tr{R \rho R^\dag}} - \frac{1}{\tr{\rho P_{H(1)}}} \right|\nonumber \\ 
			 = \|R \rho R^\dag\|_1 \frac{\left| \tr{R \rho R^\dag} - \tr{ \rho P_{H(1)}} \right|}{ \tr{R \rho R^\dag}   \tr{\rho P_{H(1)}} } 
			\nonumber\\ 
			 = \frac{\left| \tr{R \rho R^\dag} - \tr{ \rho P_{H(1)}} \right|}{\tr{\rho P_{H(1)}} },
		\end{align}
		where in the last step we used that $\| R \rho R^\dag \|_1 = \tr{R \rho R^\dag}$. 
		
		Using Lemma~\ref{lem:perturbingprob}, we have seen that inequality~\eqref{eq:probbound1} holds. 
		We will apply Lemma~\ref{lem:perturbingprob} again.  Inverting the RHS inequality one gets
        \begin{align}
            \mathrm{Tr}(\rho A^\dag A) \geq \mathrm{Tr} (\rho \tilde{A}^\dag \tilde{A})
 - 2 \delta \|A\| - \delta^2.
        \end{align} 
        Setting $A= P_{H(1)}$, $\tilde{A} = R$, $\delta = \epsilon_P$ this gives
		\begin{align*}
			\tr{\rho P_{H(1)}} \geq \tr{R \rho R^\dag} - 2\epsilon_P - \epsilon_P^2, 
		\end{align*}
        
		and so $| \tr{R \rho R^\dag} -	\tr{\rho P_{H(1)}}| \leq 2\epsilon_P + \epsilon_P^2$.
		
		Putting things together
		\begin{equation}
			\left\| \frac{R \rho R^\dag}{\tr{ R \rho R^\dag}} -  \frac{R \rho R^\dag}{\tr{\rho P_{H(1)}}} \right\|_1  \leq 4 \epsilon_P + 2\epsilon_P^2.
		\end{equation}
		We also have that
		\begin{align} 
			\nonumber \left\|\frac{R \rho R^\dagger}{ \tr{\rho P_{H(1)}}} -  \frac{P_{H(1)} \rho P_{H(1)}}{\tr{\rho P_{H(1)}}}\right\|_1  &=\frac{\|R \rho R^\dagger - P_{H(1)} \rho P_{H(1)}\|_1}{ \tr{\rho P_{H(1)}}}  \\ & \leq 4\epsilon_{P}+2\epsilon_P^2.
		\end{align}
		Summing the contributions we conclude:
		\begin{equation}
			\left\| \frac{R \rho R^\dagger}{\tr{ R \rho R^\dag}} - \frac{P_{H(1)} \rho P_{H(1)}^\dag}{\tr{ P_{H(1)} \rho}} \right\|_1  \leq 8\epsilon_P + 4\epsilon_P^2.
		\end{equation}
        From this, using Eq.~\eqref{eq:equivalentform} and contractivity of the trace norm under the partial trace:
        	\begin{equation}
			\left\| \tilde{\sigma} - \ketbra{y}{y}\right\| \leq 8\epsilon_P + 4\epsilon_P^2,
		\end{equation}
        where $\tilde{\sigma} = \mathrm{Tr}_{1,2,3}\left[\frac{R \rho R^\dagger}{\tr{ R \rho R^\dag}}\right]$.
	\end{proof}
    We set $8\epsilon_P + 4\epsilon_P^2 = \epsilon$, which gives 
\begin{align}
	    \epsilon_P = \sqrt{1 + \epsilon/4} -1.
	\end{align}
    Using this for the filtering step with the gap $\Delta = 1/\kappa$, and $\alpha_{H(1)}=\alpha$, and adding to the expected adiabatic query count upper bound of $ 835.4\alpha \kappa$, gives a total expected count on success of
    \begin{align}
      Q^* \le   835.4 \alpha \kappa + l_P &=  835.4 \alpha \kappa + \left \lceil \alpha \kappa \ln \frac{2}{\sqrt{1+\epsilon/4}-1} +2 \right \rceil.
    \end{align}
    The success probability is,
    \begin{align}
        p_{succ} &\ge \frac{1}{2} -2 (\sqrt{1+\epsilon/4}-1) \nonumber \\
        & \geq 1/2 - \epsilon/4
    \end{align}
    and so the total expected cost is
    \begin{equation}
        Q \leq \frac{2Q^*}{1- \epsilon/2}.
    \end{equation}

 Note in the expression for $Q^*$ that if we set $\alpha = 1$, $\kappa = 10^6$ and $\epsilon = 10^{-10}$ we have an adiabatic cost of $\sim 8 \times 10^8$, whereas the cost of filtering is $\sim 1.4 \times 10^6$, so the former dominates. For $\kappa =10^3$ these two costs become $835331$ and $1387$, respectively. 

 \section{Summary and outlook}

 We have developed a quantum linear solver algorithm whose complexity scales optimally in the condition number (as $O(\kappa)$) and in the error (as O($\log 1/\epsilon$)). We have also provided an extensive analysis giving guaranteed upper bounds on the worst-case non-asymptotic query counts. 
 
Our algorithm is constructed combining a range of  techniques. We substantially improve on the adiabatic computing-inspired algorithm given in~\cite{subacsi2019quantum}, via modifications that are informed by eigenpath traversal theory~\cite{boixo2009eigenpath, sanders2020compilation}, quantum eigenstate filtering~\cite{lin2020optimal, costa2022optimal}  (for which we give an expanded analysis), and improved block-encoding construction. We also incorporate a Poissonization technique introduced in~\cite{cunningham2024eigenpath}, which we optimize and for which we present a tighter analysis based on adiabatic approximation theory~\cite{jansen2007bounds}. Finally, we introduce a randomized walk operator method into the quantum linear solver, inspired by works on eigenphase traversal via unitaries~\cite{boixo2009eigenpath} and previous works where the need for Hamiltonian simulation was removed from Quantum Phase Estimation~\cite{poulin2018quantum}. This allows us to entirely forego the quantum Hamiltonian simulation subroutine and correspondingly the need for classical phase factor precomputations, which considerably simplifies the algorithmic implementation and compilation. 

Our analysis returns, for a non-Hermitian linear system, a  query query count upper bound at $\epsilon=10^{-10}$ of $1\,722\kappa$. This is  intermediate between the two other guaranteed worst-case non-asymptotic query counts upper bounds currently available, which have $ 2 34 \, 562 \kappa$~\cite{costa2022optimal} and $80 \kappa$~\cite{dalzell2024shortcut}, as reported in the latter work. These numbers are important in setting a ceiling to the worst-case costs, but they do not define a ranking between the algorithms. We could complement these studies with numerical explorations,  but these are necessarily restricted to low-dimensional problem instances and small condition number, e.g.  $N \leq 16$, $\kappa \leq 50$ in Ref.~\cite{costa2023discrete}. 

In terms of further improvements to the algorithm, our choice of sampling the walk operator is certainly not optimal, even if we require perfect dephasing of the nullspace from other eigenspaces.  It would be of interest, however, to drop the condition of perfect dephasing and allow additional single-step errors, since cancellations can occur for multiple steps, as shown in~\cite{chiang2014improved}. Leveraging these results could provide non-trivial constant prefactor gains.

Another route is to improve query count bounds by exploiting structure about the problem. For example, further information about the distribution of singular values in the linear system matrix would give us a better handle on the eigenvalues of the associated $H(s)$; in turn, this would allow us to sharpen the Poissonization analysis. It would also be of value to explore how the linear-solver algorithm can be tailored to important sub-classes of  problems, for example in the context of linear systems arising from discretization of partial differential equations. We leave these directions to future work.

 	\bigskip
	
		{\textbf{Authors contributions:} Authors are listed alphabetically. ML conceived the core algorithm, following discussions with YS on variable-time amplitude amplification and adiabatic methods.
			ML optimized the randomized method, aided by analytical analysis from YS.
            YS and ML introduced the randomized walk operator method.
			DJ and ML developed the filtering component and the error propagation analysis.
			ML, DJ and YS computed the analytical cost of the algorithm.
			ML and DJ wrote the paper and AS, YS, SP contributed to reviewing the article.
			SP and AS coordinated the collaboration.
	
	\bigskip
	
	{\textbf{Acknowledgements:} Special thanks to Robert  B Lowrie, Sukin Sim and Dong An for insightful suggestions, Jessica Lemieux for comments on an earlier draft, William Pol for introducing us to the idea of QSP `multiplexing', Dominic Berry for useful clarifications concerning the constant prefactors in Ref.~\cite{costa2022optimal}, Tyler Volkoff for help with technical details in Lemma~\ref{lem:seconderivariveP}, Stephan Eidenbenz for scientific discussions and help in coordinating this collaboration. Thanks to all the colleagues at PsiQuantum for useful discussions and support. ML acknowledges the kind hospitality from the group of Rosario Fazio at the International Center for Theoretical Physics (ICTP) in Trieste, where part of this work was carried out. A.T.S. and Y.S. acknowledge support from US Department of Energy, Advanced Simulation and Computing Beyond Moore’s Law program. Y.S. acknowledges support from US Department of Energy, Office of Science, Office of Advanced Scientific Computing Research, Accelerated Research in Quantum Computing program. }
 
	\bibliography{Bibliography}

\begin{thebibliography}{42}%
\makeatletter
\providecommand \@ifxundefined [1]{%
 \@ifx{#1\undefined}
}%
\providecommand \@ifnum [1]{%
 \ifnum #1\expandafter \@firstoftwo
 \else \expandafter \@secondoftwo
 \fi
}%
\providecommand \@ifx [1]{%
 \ifx #1\expandafter \@firstoftwo
 \else \expandafter \@secondoftwo
 \fi
}%
\providecommand \natexlab [1]{#1}%
\providecommand \enquote  [1]{``#1''}%
\providecommand \bibnamefont  [1]{#1}%
\providecommand \bibfnamefont [1]{#1}%
\providecommand \citenamefont [1]{#1}%
\providecommand \href@noop [0]{\@secondoftwo}%
\providecommand \href [0]{\begingroup \@sanitize@url \@href}%
\providecommand \@href[1]{\@@startlink{#1}\@@href}%
\providecommand \@@href[1]{\endgroup#1\@@endlink}%
\providecommand \@sanitize@url [0]{\catcode `\\12\catcode `\$12\catcode
  `\&12\catcode `\#12\catcode `\^12\catcode `\_12\catcode `\%12\relax}%
\providecommand \@@startlink[1]{}%
\providecommand \@@endlink[0]{}%
\providecommand \url  [0]{\begingroup\@sanitize@url \@url }%
\providecommand \@url [1]{\endgroup\@href {#1}{\urlprefix }}%
\providecommand \urlprefix  [0]{URL }%
\providecommand \Eprint [0]{\href }%
\providecommand \doibase [0]{https://doi.org/}%
\providecommand \selectlanguage [0]{\@gobble}%
\providecommand \bibinfo  [0]{\@secondoftwo}%
\providecommand \bibfield  [0]{\@secondoftwo}%
\providecommand \translation [1]{[#1]}%
\providecommand \BibitemOpen [0]{}%
\providecommand \bibitemStop [0]{}%
\providecommand \bibitemNoStop [0]{.\EOS\space}%
\providecommand \EOS [0]{\spacefactor3000\relax}%
\providecommand \BibitemShut  [1]{\csname bibitem#1\endcsname}%
\let\auto@bib@innerbib\@empty
\bibitem [{\citenamefont {Suba{\c{s}}{\i}}\ \emph {et~al.}(2019)\citenamefont
  {Suba{\c{s}}{\i}}, \citenamefont {Somma},\ and\ \citenamefont
  {Orsucci}}]{subacsi2019quantum}%
  \BibitemOpen
  \bibfield  {author} {\bibinfo {author} {\bibfnamefont {Y.}~\bibnamefont
  {Suba{\c{s}}{\i}}}, \bibinfo {author} {\bibfnamefont {R.~D.}\ \bibnamefont
  {Somma}},\ and\ \bibinfo {author} {\bibfnamefont {D.}~\bibnamefont
  {Orsucci}},\ }\bibfield  {title} {\bibinfo {title} {Quantum algorithms for
  systems of linear equations inspired by adiabatic quantum computing},\
  }\href@noop {} {\bibfield  {journal} {\bibinfo  {journal} {Physical review
  letters}\ }\textbf {\bibinfo {volume} {122}},\ \bibinfo {pages} {060504}
  (\bibinfo {year} {2019})}\BibitemShut {NoStop}%
\bibitem [{\citenamefont {Lin}\ and\ \citenamefont
  {Tong}(2020)}]{lin2020optimal}%
  \BibitemOpen
  \bibfield  {author} {\bibinfo {author} {\bibfnamefont {L.}~\bibnamefont
  {Lin}}\ and\ \bibinfo {author} {\bibfnamefont {Y.}~\bibnamefont {Tong}},\
  }\bibfield  {title} {\bibinfo {title} {Optimal polynomial based quantum
  eigenstate filtering with application to solving quantum linear systems},\
  }\href@noop {} {\bibfield  {journal} {\bibinfo  {journal} {Quantum}\ }\textbf
  {\bibinfo {volume} {4}},\ \bibinfo {pages} {361} (\bibinfo {year}
  {2020})}\BibitemShut {NoStop}%
\bibitem [{\citenamefont {Cunningham}\ and\ \citenamefont
  {Roland}(2024)}]{cunningham2024eigenpath}%
  \BibitemOpen
  \bibfield  {author} {\bibinfo {author} {\bibfnamefont {J.}~\bibnamefont
  {Cunningham}}\ and\ \bibinfo {author} {\bibfnamefont {J.}~\bibnamefont
  {Roland}},\ }\bibfield  {title} {\bibinfo {title} {Eigenpath traversal by
  poisson-distributed phase randomisation},\ }\bibfield  {journal} {\bibinfo
  {journal} {arXiv preprint arXiv:2406.03972}\ }\href
  {https://doi.org/https://doi.org/10.48550/arXiv.2406.03972}
  {https://doi.org/10.48550/arXiv.2406.03972} (\bibinfo {year}
  {2024})\BibitemShut {NoStop}%
\bibitem [{\citenamefont {Berry}\ \emph {et~al.}(2017)\citenamefont {Berry},
  \citenamefont {Childs}, \citenamefont {Ostrander},\ and\ \citenamefont
  {Wang}}]{berry2017quantum}%
  \BibitemOpen
  \bibfield  {author} {\bibinfo {author} {\bibfnamefont {D.~W.}\ \bibnamefont
  {Berry}}, \bibinfo {author} {\bibfnamefont {A.~M.}\ \bibnamefont {Childs}},
  \bibinfo {author} {\bibfnamefont {A.}~\bibnamefont {Ostrander}},\ and\
  \bibinfo {author} {\bibfnamefont {G.}~\bibnamefont {Wang}},\ }\bibfield
  {title} {\bibinfo {title} {Quantum algorithm for linear differential
  equations with exponentially improved dependence on precision},\ }\href@noop
  {} {\bibfield  {journal} {\bibinfo  {journal} {Communications in Mathematical
  Physics}\ }\textbf {\bibinfo {volume} {356}},\ \bibinfo {pages} {1057}
  (\bibinfo {year} {2017})}\BibitemShut {NoStop}%
\bibitem [{\citenamefont {Childs}\ \emph {et~al.}(2021)\citenamefont {Childs},
  \citenamefont {Liu},\ and\ \citenamefont {Ostrander}}]{childs2021high}%
  \BibitemOpen
  \bibfield  {author} {\bibinfo {author} {\bibfnamefont {A.~M.}\ \bibnamefont
  {Childs}}, \bibinfo {author} {\bibfnamefont {J.-P.}\ \bibnamefont {Liu}},\
  and\ \bibinfo {author} {\bibfnamefont {A.}~\bibnamefont {Ostrander}},\
  }\bibfield  {title} {\bibinfo {title} {High-precision quantum algorithms for
  partial differential equations},\ }\href@noop {} {\bibfield  {journal}
  {\bibinfo  {journal} {Quantum}\ }\textbf {\bibinfo {volume} {5}},\ \bibinfo
  {pages} {574} (\bibinfo {year} {2021})}\BibitemShut {NoStop}%
\bibitem [{\citenamefont {Krovi}(2022)}]{krovi2022improved}%
  \BibitemOpen
  \bibfield  {author} {\bibinfo {author} {\bibfnamefont {H.}~\bibnamefont
  {Krovi}},\ }\bibfield  {title} {\bibinfo {title} {Improved quantum algorithms
  for linear and nonlinear differential equations},\ }\href@noop {} {\bibfield
  {journal} {\bibinfo  {journal} {arXiv preprint arXiv:2202.01054}\ } (\bibinfo
  {year} {2022})}\BibitemShut {NoStop}%
\bibitem [{\citenamefont {Berry}\ and\ \citenamefont
  {Costa}(2022)}]{berry2022quantum}%
  \BibitemOpen
  \bibfield  {author} {\bibinfo {author} {\bibfnamefont {D.~W.}\ \bibnamefont
  {Berry}}\ and\ \bibinfo {author} {\bibfnamefont {P.}~\bibnamefont {Costa}},\
  }\bibfield  {title} {\bibinfo {title} {Quantum algorithm for time-dependent
  differential equations using {Dyson} series},\ }\href@noop {} {\bibfield
  {journal} {\bibinfo  {journal} {arXiv preprint arXiv:2212.03544}\ } (\bibinfo
  {year} {2022})}\BibitemShut {NoStop}%
\bibitem [{\citenamefont {Ameri}\ \emph {et~al.}(2023)\citenamefont {Ameri},
  \citenamefont {Ye}, \citenamefont {Cappellaro}, \citenamefont {Krovi},\ and\
  \citenamefont {Loureiro}}]{ameri2023quantum}%
  \BibitemOpen
  \bibfield  {author} {\bibinfo {author} {\bibfnamefont {A.}~\bibnamefont
  {Ameri}}, \bibinfo {author} {\bibfnamefont {E.}~\bibnamefont {Ye}}, \bibinfo
  {author} {\bibfnamefont {P.}~\bibnamefont {Cappellaro}}, \bibinfo {author}
  {\bibfnamefont {H.}~\bibnamefont {Krovi}},\ and\ \bibinfo {author}
  {\bibfnamefont {N.~F.}\ \bibnamefont {Loureiro}},\ }\bibfield  {title}
  {\bibinfo {title} {Quantum algorithm for the linear vlasov equation with
  collisions},\ }in\ \href@noop {} {\emph {\bibinfo {booktitle} {2023 IEEE
  International Conference on Quantum Computing and Engineering (QCE)}}},\
  Vol.~\bibinfo {volume} {2}\ (\bibinfo {organization} {IEEE},\ \bibinfo {year}
  {2023})\ pp.\ \bibinfo {pages} {56--65}\BibitemShut {NoStop}%
\bibitem [{\citenamefont {Bagherimehrab}\ \emph {et~al.}(2023)\citenamefont
  {Bagherimehrab}, \citenamefont {Nakaji}, \citenamefont {Wiebe},\ and\
  \citenamefont {Aspuru-Guzik}}]{bagherimehrab2023fast}%
  \BibitemOpen
  \bibfield  {author} {\bibinfo {author} {\bibfnamefont {M.}~\bibnamefont
  {Bagherimehrab}}, \bibinfo {author} {\bibfnamefont {K.}~\bibnamefont
  {Nakaji}}, \bibinfo {author} {\bibfnamefont {N.}~\bibnamefont {Wiebe}},\ and\
  \bibinfo {author} {\bibfnamefont {A.}~\bibnamefont {Aspuru-Guzik}},\
  }\bibfield  {title} {\bibinfo {title} {Fast quantum algorithm for
  differential equations},\ }\href@noop {} {\bibfield  {journal} {\bibinfo
  {journal} {arXiv preprint arXiv:2306.11802}\ } (\bibinfo {year}
  {2023})}\BibitemShut {NoStop}%
\bibitem [{\citenamefont {Liu}\ \emph {et~al.}(2021)\citenamefont {Liu},
  \citenamefont {Kolden}, \citenamefont {Krovi}, \citenamefont {Loureiro},
  \citenamefont {Trivisa},\ and\ \citenamefont {Childs}}]{liu2021efficient}%
  \BibitemOpen
  \bibfield  {author} {\bibinfo {author} {\bibfnamefont {J.-P.}\ \bibnamefont
  {Liu}}, \bibinfo {author} {\bibfnamefont {H.~{\O}.}\ \bibnamefont {Kolden}},
  \bibinfo {author} {\bibfnamefont {H.~K.}\ \bibnamefont {Krovi}}, \bibinfo
  {author} {\bibfnamefont {N.~F.}\ \bibnamefont {Loureiro}}, \bibinfo {author}
  {\bibfnamefont {K.}~\bibnamefont {Trivisa}},\ and\ \bibinfo {author}
  {\bibfnamefont {A.~M.}\ \bibnamefont {Childs}},\ }\bibfield  {title}
  {\bibinfo {title} {Efficient quantum algorithm for dissipative nonlinear
  differential equations},\ }\href@noop {} {\bibfield  {journal} {\bibinfo
  {journal} {Proceedings of the National Academy of Sciences}\ }\textbf
  {\bibinfo {volume} {118}},\ \bibinfo {pages} {e2026805118} (\bibinfo {year}
  {2021})}\BibitemShut {NoStop}%
\bibitem [{\citenamefont {An}\ \emph {et~al.}(2022)\citenamefont {An},
  \citenamefont {Fang}, \citenamefont {Jordan}, \citenamefont {Liu},
  \citenamefont {Low},\ and\ \citenamefont {Wang}}]{an2022efficient}%
  \BibitemOpen
  \bibfield  {author} {\bibinfo {author} {\bibfnamefont {D.}~\bibnamefont
  {An}}, \bibinfo {author} {\bibfnamefont {D.}~\bibnamefont {Fang}}, \bibinfo
  {author} {\bibfnamefont {S.}~\bibnamefont {Jordan}}, \bibinfo {author}
  {\bibfnamefont {J.-P.}\ \bibnamefont {Liu}}, \bibinfo {author} {\bibfnamefont
  {G.~H.}\ \bibnamefont {Low}},\ and\ \bibinfo {author} {\bibfnamefont
  {J.}~\bibnamefont {Wang}},\ }\bibfield  {title} {\bibinfo {title} {Efficient
  quantum algorithm for nonlinear reaction-diffusion equations and energy
  estimation},\ }\href@noop {} {\bibfield  {journal} {\bibinfo  {journal}
  {arXiv preprint arXiv:2205.01141}\ } (\bibinfo {year} {2022})}\BibitemShut
  {NoStop}%
\bibitem [{\citenamefont {Jin}\ and\ \citenamefont
  {Liu}(2022)}]{jin2022quantum}%
  \BibitemOpen
  \bibfield  {author} {\bibinfo {author} {\bibfnamefont {S.}~\bibnamefont
  {Jin}}\ and\ \bibinfo {author} {\bibfnamefont {N.}~\bibnamefont {Liu}},\
  }\bibfield  {title} {\bibinfo {title} {Quantum algorithms for computing
  observables of nonlinear partial differential equations},\ }\href@noop {}
  {\bibfield  {journal} {\bibinfo  {journal} {arXiv preprint arXiv:2202.07834}\
  } (\bibinfo {year} {2022})}\BibitemShut {NoStop}%
\bibitem [{\citenamefont {Surana}\ \emph {et~al.}(2022)\citenamefont {Surana},
  \citenamefont {Gnanasekaran},\ and\ \citenamefont
  {Sahai}}]{surana2022carleman}%
  \BibitemOpen
  \bibfield  {author} {\bibinfo {author} {\bibfnamefont {A.}~\bibnamefont
  {Surana}}, \bibinfo {author} {\bibfnamefont {A.}~\bibnamefont
  {Gnanasekaran}},\ and\ \bibinfo {author} {\bibfnamefont {T.}~\bibnamefont
  {Sahai}},\ }\bibfield  {title} {\bibinfo {title} {Carleman linearization
  based efficient quantum algorithm for higher order polynomial differential
  equations},\ }\href@noop {} {\bibfield  {journal} {\bibinfo  {journal} {arXiv
  preprint arXiv:2212.10775}\ } (\bibinfo {year} {2022})}\BibitemShut {NoStop}%
\bibitem [{\citenamefont {Costa}\ \emph
  {et~al.}(2023{\natexlab{a}})\citenamefont {Costa}, \citenamefont {Schleich},
  \citenamefont {Morales},\ and\ \citenamefont {Berry}}]{costa2023further}%
  \BibitemOpen
  \bibfield  {author} {\bibinfo {author} {\bibfnamefont {P.}~\bibnamefont
  {Costa}}, \bibinfo {author} {\bibfnamefont {P.}~\bibnamefont {Schleich}},
  \bibinfo {author} {\bibfnamefont {M.~E.}\ \bibnamefont {Morales}},\ and\
  \bibinfo {author} {\bibfnamefont {D.~W.}\ \bibnamefont {Berry}},\ }\bibfield
  {title} {\bibinfo {title} {Further improving quantum algorithms for nonlinear
  differential equations via higher-order methods and rescaling},\ }\href@noop
  {} {\bibfield  {journal} {\bibinfo  {journal} {arXiv preprint
  arXiv:2312.09518}\ } (\bibinfo {year} {2023}{\natexlab{a}})}\BibitemShut
  {NoStop}%
\bibitem [{\citenamefont {Krovi}(2024)}]{krovi2024quantum}%
  \BibitemOpen
  \bibfield  {author} {\bibinfo {author} {\bibfnamefont {H.}~\bibnamefont
  {Krovi}},\ }\bibfield  {title} {\bibinfo {title} {Quantum algorithms to
  simulate quadratic classical hamiltonians and optimal control},\ }\href@noop
  {} {\bibfield  {journal} {\bibinfo  {journal} {arXiv preprint
  arXiv:2404.07303}\ } (\bibinfo {year} {2024})}\BibitemShut {NoStop}%
\bibitem [{\citenamefont {Wiebe}\ \emph {et~al.}(2012)\citenamefont {Wiebe},
  \citenamefont {Braun},\ and\ \citenamefont {Lloyd}}]{wiebe2012quantum}%
  \BibitemOpen
  \bibfield  {author} {\bibinfo {author} {\bibfnamefont {N.}~\bibnamefont
  {Wiebe}}, \bibinfo {author} {\bibfnamefont {D.}~\bibnamefont {Braun}},\ and\
  \bibinfo {author} {\bibfnamefont {S.}~\bibnamefont {Lloyd}},\ }\bibfield
  {title} {\bibinfo {title} {Quantum algorithm for data fitting},\ }\href@noop
  {} {\bibfield  {journal} {\bibinfo  {journal} {Physical review letters}\
  }\textbf {\bibinfo {volume} {109}},\ \bibinfo {pages} {050505} (\bibinfo
  {year} {2012})}\BibitemShut {NoStop}%
\bibitem [{\citenamefont {Clader}\ \emph {et~al.}(2013)\citenamefont {Clader},
  \citenamefont {Jacobs},\ and\ \citenamefont
  {Sprouse}}]{clader2013preconditioned}%
  \BibitemOpen
  \bibfield  {author} {\bibinfo {author} {\bibfnamefont {B.~D.}\ \bibnamefont
  {Clader}}, \bibinfo {author} {\bibfnamefont {B.~C.}\ \bibnamefont {Jacobs}},\
  and\ \bibinfo {author} {\bibfnamefont {C.~R.}\ \bibnamefont {Sprouse}},\
  }\bibfield  {title} {\bibinfo {title} {Preconditioned quantum linear system
  algorithm},\ }\href@noop {} {\bibfield  {journal} {\bibinfo  {journal}
  {Physical review letters}\ }\textbf {\bibinfo {volume} {110}},\ \bibinfo
  {pages} {250504} (\bibinfo {year} {2013})}\BibitemShut {NoStop}%
\bibitem [{\citenamefont {Rebentrost}\ \emph {et~al.}(2014)\citenamefont
  {Rebentrost}, \citenamefont {Mohseni},\ and\ \citenamefont
  {Lloyd}}]{rebentrost2014quantum}%
  \BibitemOpen
  \bibfield  {author} {\bibinfo {author} {\bibfnamefont {P.}~\bibnamefont
  {Rebentrost}}, \bibinfo {author} {\bibfnamefont {M.}~\bibnamefont
  {Mohseni}},\ and\ \bibinfo {author} {\bibfnamefont {S.}~\bibnamefont
  {Lloyd}},\ }\bibfield  {title} {\bibinfo {title} {Quantum support vector
  machine for big data classification},\ }\href@noop {} {\bibfield  {journal}
  {\bibinfo  {journal} {Physical review letters}\ }\textbf {\bibinfo {volume}
  {113}},\ \bibinfo {pages} {130503} (\bibinfo {year} {2014})}\BibitemShut
  {NoStop}%
\bibitem [{\citenamefont {Liu}\ \emph {et~al.}(2024)\citenamefont {Liu},
  \citenamefont {Liu}, \citenamefont {Liu}, \citenamefont {Ye}, \citenamefont
  {Wang}, \citenamefont {Alexeev}, \citenamefont {Eisert},\ and\ \citenamefont
  {Jiang}}]{liu2024towards}%
  \BibitemOpen
  \bibfield  {author} {\bibinfo {author} {\bibfnamefont {J.}~\bibnamefont
  {Liu}}, \bibinfo {author} {\bibfnamefont {M.}~\bibnamefont {Liu}}, \bibinfo
  {author} {\bibfnamefont {J.-P.}\ \bibnamefont {Liu}}, \bibinfo {author}
  {\bibfnamefont {Z.}~\bibnamefont {Ye}}, \bibinfo {author} {\bibfnamefont
  {Y.}~\bibnamefont {Wang}}, \bibinfo {author} {\bibfnamefont {Y.}~\bibnamefont
  {Alexeev}}, \bibinfo {author} {\bibfnamefont {J.}~\bibnamefont {Eisert}},\
  and\ \bibinfo {author} {\bibfnamefont {L.}~\bibnamefont {Jiang}},\ }\bibfield
   {title} {\bibinfo {title} {Towards provably efficient quantum algorithms for
  large-scale machine-learning models},\ }\href@noop {} {\bibfield  {journal}
  {\bibinfo  {journal} {Nature Communications}\ }\textbf {\bibinfo {volume}
  {15}},\ \bibinfo {pages} {434} (\bibinfo {year} {2024})}\BibitemShut
  {NoStop}%
\bibitem [{\citenamefont {Dalzell}\ \emph {et~al.}(2022)\citenamefont
  {Dalzell}, \citenamefont {Clader}, \citenamefont {Salton}, \citenamefont
  {Berta}, \citenamefont {Lin}, \citenamefont {Bader}, \citenamefont
  {Stamatopoulos}, \citenamefont {Schuetz}, \citenamefont {Brand{\~a}o},
  \citenamefont {Katzgraber} \emph {et~al.}}]{dalzell2022end}%
  \BibitemOpen
  \bibfield  {author} {\bibinfo {author} {\bibfnamefont {A.~M.}\ \bibnamefont
  {Dalzell}}, \bibinfo {author} {\bibfnamefont {B.~D.}\ \bibnamefont {Clader}},
  \bibinfo {author} {\bibfnamefont {G.}~\bibnamefont {Salton}}, \bibinfo
  {author} {\bibfnamefont {M.}~\bibnamefont {Berta}}, \bibinfo {author}
  {\bibfnamefont {C.~Y.-Y.}\ \bibnamefont {Lin}}, \bibinfo {author}
  {\bibfnamefont {D.~A.}\ \bibnamefont {Bader}}, \bibinfo {author}
  {\bibfnamefont {N.}~\bibnamefont {Stamatopoulos}}, \bibinfo {author}
  {\bibfnamefont {M.~J.}\ \bibnamefont {Schuetz}}, \bibinfo {author}
  {\bibfnamefont {F.~G.}\ \bibnamefont {Brand{\~a}o}}, \bibinfo {author}
  {\bibfnamefont {H.~G.}\ \bibnamefont {Katzgraber}}, \emph {et~al.},\
  }\bibfield  {title} {\bibinfo {title} {End-to-end resource analysis for
  quantum interior point methods and portfolio optimization},\ }\href@noop {}
  {\bibfield  {journal} {\bibinfo  {journal} {arXiv preprint arXiv:2211.12489}\
  } (\bibinfo {year} {2022})}\BibitemShut {NoStop}%
\bibitem [{\citenamefont {Harrow}\ \emph {et~al.}(2009)\citenamefont {Harrow},
  \citenamefont {Hassidim},\ and\ \citenamefont {Lloyd}}]{harrow2009quantum}%
  \BibitemOpen
  \bibfield  {author} {\bibinfo {author} {\bibfnamefont {A.~W.}\ \bibnamefont
  {Harrow}}, \bibinfo {author} {\bibfnamefont {A.}~\bibnamefont {Hassidim}},\
  and\ \bibinfo {author} {\bibfnamefont {S.}~\bibnamefont {Lloyd}},\ }\bibfield
   {title} {\bibinfo {title} {Quantum algorithm for linear systems of
  equations},\ }\href@noop {} {\bibfield  {journal} {\bibinfo  {journal}
  {Physical review letters}\ }\textbf {\bibinfo {volume} {103}},\ \bibinfo
  {pages} {150502} (\bibinfo {year} {2009})}\BibitemShut {NoStop}%
\bibitem [{\citenamefont {Ambainis}(2012)}]{ambainis2012variable}%
  \BibitemOpen
  \bibfield  {author} {\bibinfo {author} {\bibfnamefont {A.}~\bibnamefont
  {Ambainis}},\ }\bibfield  {title} {\bibinfo {title} {Variable time amplitude
  amplification and quantum algorithms for linear algebra problems},\ }in\
  \href@noop {} {\emph {\bibinfo {booktitle} {STACS'12 (29th Symposium on
  Theoretical Aspects of Computer Science)}}},\ Vol.~\bibinfo {volume} {14}\
  (\bibinfo {organization} {LIPIcs},\ \bibinfo {year} {2012})\ pp.\ \bibinfo
  {pages} {636--647}\BibitemShut {NoStop}%
\bibitem [{\citenamefont {Childs}\ \emph {et~al.}(2017)\citenamefont {Childs},
  \citenamefont {Kothari},\ and\ \citenamefont {Somma}}]{childs2017quantum}%
  \BibitemOpen
  \bibfield  {author} {\bibinfo {author} {\bibfnamefont {A.~M.}\ \bibnamefont
  {Childs}}, \bibinfo {author} {\bibfnamefont {R.}~\bibnamefont {Kothari}},\
  and\ \bibinfo {author} {\bibfnamefont {R.~D.}\ \bibnamefont {Somma}},\
  }\bibfield  {title} {\bibinfo {title} {Quantum algorithm for systems of
  linear equations with exponentially improved dependence on precision},\
  }\href@noop {} {\bibfield  {journal} {\bibinfo  {journal} {SIAM Journal on
  Computing}\ }\textbf {\bibinfo {volume} {46}},\ \bibinfo {pages} {1920}
  (\bibinfo {year} {2017})}\BibitemShut {NoStop}%
\bibitem [{\citenamefont {Chakraborty}\ \emph {et~al.}(2018)\citenamefont
  {Chakraborty}, \citenamefont {Gily{\'e}n},\ and\ \citenamefont
  {Jeffery}}]{chakraborty2018power}%
  \BibitemOpen
  \bibfield  {author} {\bibinfo {author} {\bibfnamefont {S.}~\bibnamefont
  {Chakraborty}}, \bibinfo {author} {\bibfnamefont {A.}~\bibnamefont
  {Gily{\'e}n}},\ and\ \bibinfo {author} {\bibfnamefont {S.}~\bibnamefont
  {Jeffery}},\ }\bibfield  {title} {\bibinfo {title} {The power of
  block-encoded matrix powers: improved regression techniques via faster
  hamiltonian simulation},\ }\href@noop {} {\bibfield  {journal} {\bibinfo
  {journal} {arXiv preprint arXiv:1804.01973}\ } (\bibinfo {year}
  {2018})}\BibitemShut {NoStop}%
\bibitem [{\citenamefont {Costa}\ \emph {et~al.}(2022)\citenamefont {Costa},
  \citenamefont {An}, \citenamefont {Sanders}, \citenamefont {Su},
  \citenamefont {Babbush},\ and\ \citenamefont {Berry}}]{costa2022optimal}%
  \BibitemOpen
  \bibfield  {author} {\bibinfo {author} {\bibfnamefont {P.~C.}\ \bibnamefont
  {Costa}}, \bibinfo {author} {\bibfnamefont {D.}~\bibnamefont {An}}, \bibinfo
  {author} {\bibfnamefont {Y.~R.}\ \bibnamefont {Sanders}}, \bibinfo {author}
  {\bibfnamefont {Y.}~\bibnamefont {Su}}, \bibinfo {author} {\bibfnamefont
  {R.}~\bibnamefont {Babbush}},\ and\ \bibinfo {author} {\bibfnamefont {D.~W.}\
  \bibnamefont {Berry}},\ }\bibfield  {title} {\bibinfo {title} {Optimal
  scaling quantum linear-systems solver via discrete adiabatic theorem},\
  }\href@noop {} {\bibfield  {journal} {\bibinfo  {journal} {PRX Quantum}\
  }\textbf {\bibinfo {volume} {3}},\ \bibinfo {pages} {040303} (\bibinfo {year}
  {2022})}\BibitemShut {NoStop}%
\bibitem [{\citenamefont {Costa}\ \emph
  {et~al.}(2023{\natexlab{b}})\citenamefont {Costa}, \citenamefont {An},
  \citenamefont {Babbush},\ and\ \citenamefont {Berry}}]{costa2023discrete}%
  \BibitemOpen
  \bibfield  {author} {\bibinfo {author} {\bibfnamefont {P.}~\bibnamefont
  {Costa}}, \bibinfo {author} {\bibfnamefont {D.}~\bibnamefont {An}}, \bibinfo
  {author} {\bibfnamefont {R.}~\bibnamefont {Babbush}},\ and\ \bibinfo {author}
  {\bibfnamefont {D.}~\bibnamefont {Berry}},\ }\bibfield  {title} {\bibinfo
  {title} {The discrete adiabatic quantum linear system solver has lower
  constant factors than the randomized adiabatic solver},\ }\href@noop {}
  {\bibfield  {journal} {\bibinfo  {journal} {arXiv preprint arXiv:2312.07690}\
  } (\bibinfo {year} {2023}{\natexlab{b}})}\BibitemShut {NoStop}%
\bibitem [{\citenamefont {Dalzell}(2024)}]{dalzell2024shortcut}%
  \BibitemOpen
  \bibfield  {author} {\bibinfo {author} {\bibfnamefont {A.~M.}\ \bibnamefont
  {Dalzell}},\ }\bibfield  {title} {\bibinfo {title} {A shortcut to an optimal
  quantum linear system solver},\ }\bibfield  {journal} {\bibinfo  {journal}
  {arXiv preprint arXiv:2406.12086}\ }\href
  {https://doi.org/10.48550/arXiv.2406.12086} {10.48550/arXiv.2406.12086}
  (\bibinfo {year} {2024})\BibitemShut {NoStop}%
\bibitem [{\citenamefont {Low}\ and\ \citenamefont
  {Su}(2024)}]{low2024quantum}%
  \BibitemOpen
  \bibfield  {author} {\bibinfo {author} {\bibfnamefont {G.~H.}\ \bibnamefont
  {Low}}\ and\ \bibinfo {author} {\bibfnamefont {Y.}~\bibnamefont {Su}},\
  }\bibfield  {title} {\bibinfo {title} {Quantum linear system algorithm with
  optimal queries to initial state preparation},\ }\bibfield  {journal}
  {\bibinfo  {journal} {arXiv preprint arXiv:2410.18178}\ }\href
  {https://doi.org/10.48550/arXiv.2410.18178} {10.48550/arXiv.2410.18178}
  (\bibinfo {year} {2024})\BibitemShut {NoStop}%
\bibitem [{\citenamefont {Morales}\ \emph {et~al.}(2024)\citenamefont
  {Morales}, \citenamefont {Pira}, \citenamefont {Schleich}, \citenamefont
  {Koor}, \citenamefont {Costa}, \citenamefont {An}, \citenamefont {Lin},
  \citenamefont {Rebentrost},\ and\ \citenamefont
  {Berry}}]{morales2024quantum}%
  \BibitemOpen
  \bibfield  {author} {\bibinfo {author} {\bibfnamefont {M.~E.}\ \bibnamefont
  {Morales}}, \bibinfo {author} {\bibfnamefont {L.}~\bibnamefont {Pira}},
  \bibinfo {author} {\bibfnamefont {P.}~\bibnamefont {Schleich}}, \bibinfo
  {author} {\bibfnamefont {K.}~\bibnamefont {Koor}}, \bibinfo {author}
  {\bibfnamefont {P.}~\bibnamefont {Costa}}, \bibinfo {author} {\bibfnamefont
  {D.}~\bibnamefont {An}}, \bibinfo {author} {\bibfnamefont {L.}~\bibnamefont
  {Lin}}, \bibinfo {author} {\bibfnamefont {P.}~\bibnamefont {Rebentrost}},\
  and\ \bibinfo {author} {\bibfnamefont {D.~W.}\ \bibnamefont {Berry}},\
  }\bibfield  {title} {\bibinfo {title} {Quantum linear system solvers: A
  survey of algorithms and applications},\ }\href@noop {} {\bibfield  {journal}
  {\bibinfo  {journal} {arXiv preprint arXiv:2411.02522}\ } (\bibinfo {year}
  {2024})}\BibitemShut {NoStop}%
\bibitem [{\citenamefont {Boixo}\ \emph {et~al.}(2009)\citenamefont {Boixo},
  \citenamefont {Knill},\ and\ \citenamefont {Somma}}]{boixo2009eigenpath}%
  \BibitemOpen
  \bibfield  {author} {\bibinfo {author} {\bibfnamefont {S.}~\bibnamefont
  {Boixo}}, \bibinfo {author} {\bibfnamefont {E.}~\bibnamefont {Knill}},\ and\
  \bibinfo {author} {\bibfnamefont {R.~D.}\ \bibnamefont {Somma}},\ }\bibfield
  {title} {\bibinfo {title} {Eigenpath traversal by phase randomization.},\
  }\href@noop {} {\bibfield  {journal} {\bibinfo  {journal} {Quantum Inf.
  Comput.}\ }\textbf {\bibinfo {volume} {9}},\ \bibinfo {pages} {833} (\bibinfo
  {year} {2009})}\BibitemShut {NoStop}%
\bibitem [{\citenamefont {Poulin}\ \emph {et~al.}(2018)\citenamefont {Poulin},
  \citenamefont {Kitaev}, \citenamefont {Steiger}, \citenamefont {Hastings},\
  and\ \citenamefont {Troyer}}]{poulin2018quantum}%
  \BibitemOpen
  \bibfield  {author} {\bibinfo {author} {\bibfnamefont {D.}~\bibnamefont
  {Poulin}}, \bibinfo {author} {\bibfnamefont {A.}~\bibnamefont {Kitaev}},
  \bibinfo {author} {\bibfnamefont {D.~S.}\ \bibnamefont {Steiger}}, \bibinfo
  {author} {\bibfnamefont {M.~B.}\ \bibnamefont {Hastings}},\ and\ \bibinfo
  {author} {\bibfnamefont {M.}~\bibnamefont {Troyer}},\ }\bibfield  {title}
  {\bibinfo {title} {Quantum algorithm for spectral measurement with a lower
  gate count},\ }\href@noop {} {\bibfield  {journal} {\bibinfo  {journal}
  {Physical review letters}\ }\textbf {\bibinfo {volume} {121}},\ \bibinfo
  {pages} {010501} (\bibinfo {year} {2018})}\BibitemShut {NoStop}%
\bibitem [{\citenamefont {Gily{\'e}n}\ \emph {et~al.}(2018)\citenamefont
  {Gily{\'e}n}, \citenamefont {Su}, \citenamefont {Low},\ and\ \citenamefont
  {Wiebe}}]{gilyen2018quantum}%
  \BibitemOpen
  \bibfield  {author} {\bibinfo {author} {\bibfnamefont {A.}~\bibnamefont
  {Gily{\'e}n}}, \bibinfo {author} {\bibfnamefont {Y.}~\bibnamefont {Su}},
  \bibinfo {author} {\bibfnamefont {G.~H.}\ \bibnamefont {Low}},\ and\ \bibinfo
  {author} {\bibfnamefont {N.}~\bibnamefont {Wiebe}},\ }\bibfield  {title}
  {\bibinfo {title} {Quantum singular value transformation and beyond:
  exponential improvements for quantum matrix arithmetics [full version]},\
  }\href@noop {} {\bibfield  {journal} {\bibinfo  {journal} {arXiv preprint
  arXiv:1806.01838}\ }\textbf {\bibinfo {volume} {4}} (\bibinfo {year}
  {2018})}\BibitemShut {NoStop}%
\bibitem [{\citenamefont {Lin}(2022)}]{lin2022lecture}%
  \BibitemOpen
  \bibfield  {author} {\bibinfo {author} {\bibfnamefont {L.}~\bibnamefont
  {Lin}},\ }\bibfield  {title} {\bibinfo {title} {Lecture notes on quantum
  algorithms for scientific computation},\ }\href@noop {} {\bibfield  {journal}
  {\bibinfo  {journal} {arXiv preprint arXiv:2201.08309}\ } (\bibinfo {year}
  {2022})}\BibitemShut {NoStop}%
\bibitem [{\citenamefont {Camps}\ \emph {et~al.}(2022)\citenamefont {Camps},
  \citenamefont {Lin}, \citenamefont {Van~Beeumen},\ and\ \citenamefont
  {Yang}}]{camps2022explicit}%
  \BibitemOpen
  \bibfield  {author} {\bibinfo {author} {\bibfnamefont {D.}~\bibnamefont
  {Camps}}, \bibinfo {author} {\bibfnamefont {L.}~\bibnamefont {Lin}}, \bibinfo
  {author} {\bibfnamefont {R.}~\bibnamefont {Van~Beeumen}},\ and\ \bibinfo
  {author} {\bibfnamefont {C.}~\bibnamefont {Yang}},\ }\bibfield  {title}
  {\bibinfo {title} {Explicit quantum circuits for block encodings of certain
  sparse matrices},\ }\href@noop {} {\bibfield  {journal} {\bibinfo  {journal}
  {arXiv preprint arXiv:2203.10236}\ } (\bibinfo {year} {2022})}\BibitemShut
  {NoStop}%
\bibitem [{\citenamefont {S{\"u}nderhauf}\ \emph {et~al.}(2023)\citenamefont
  {S{\"u}nderhauf}, \citenamefont {Campbell},\ and\ \citenamefont
  {Camps}}]{sunderhauf2023block}%
  \BibitemOpen
  \bibfield  {author} {\bibinfo {author} {\bibfnamefont {C.}~\bibnamefont
  {S{\"u}nderhauf}}, \bibinfo {author} {\bibfnamefont {E.}~\bibnamefont
  {Campbell}},\ and\ \bibinfo {author} {\bibfnamefont {J.}~\bibnamefont
  {Camps}},\ }\bibfield  {title} {\bibinfo {title} {Block-encoding structured
  matrices for data input in quantum computing},\ }\href@noop {} {\bibfield
  {journal} {\bibinfo  {journal} {arXiv preprint arXiv:2302.10949}\ } (\bibinfo
  {year} {2023})}\BibitemShut {NoStop}%
\bibitem [{\citenamefont {Nguyen}\ \emph {et~al.}(2022)\citenamefont {Nguyen},
  \citenamefont {Kiani},\ and\ \citenamefont {Lloyd}}]{nguyen2022block}%
  \BibitemOpen
  \bibfield  {author} {\bibinfo {author} {\bibfnamefont {Q.~T.}\ \bibnamefont
  {Nguyen}}, \bibinfo {author} {\bibfnamefont {B.~T.}\ \bibnamefont {Kiani}},\
  and\ \bibinfo {author} {\bibfnamefont {S.}~\bibnamefont {Lloyd}},\ }\bibfield
   {title} {\bibinfo {title} {Block-encoding dense and full-rank kernels using
  hierarchical matrices: applications in quantum numerical linear algebra},\
  }\href@noop {} {\bibfield  {journal} {\bibinfo  {journal} {Quantum}\ }\textbf
  {\bibinfo {volume} {6}},\ \bibinfo {pages} {876} (\bibinfo {year}
  {2022})}\BibitemShut {NoStop}%
\bibitem [{\citenamefont {Li}\ \emph {et~al.}(2023)\citenamefont {Li},
  \citenamefont {Ni},\ and\ \citenamefont {Ying}}]{li2023efficient}%
  \BibitemOpen
  \bibfield  {author} {\bibinfo {author} {\bibfnamefont {H.}~\bibnamefont
  {Li}}, \bibinfo {author} {\bibfnamefont {H.}~\bibnamefont {Ni}},\ and\
  \bibinfo {author} {\bibfnamefont {L.}~\bibnamefont {Ying}},\ }\bibfield
  {title} {\bibinfo {title} {On efficient quantum block encoding of
  pseudo-differential operators},\ }\href@noop {} {\bibfield  {journal}
  {\bibinfo  {journal} {arXiv preprint arXiv:2301.08908}\ } (\bibinfo {year}
  {2023})}\BibitemShut {NoStop}%
\bibitem [{\citenamefont {Chiang}\ \emph {et~al.}(2014)\citenamefont {Chiang},
  \citenamefont {Xu},\ and\ \citenamefont {Somma}}]{chiang2014improved}%
  \BibitemOpen
  \bibfield  {author} {\bibinfo {author} {\bibfnamefont {H.-T.}\ \bibnamefont
  {Chiang}}, \bibinfo {author} {\bibfnamefont {G.}~\bibnamefont {Xu}},\ and\
  \bibinfo {author} {\bibfnamefont {R.~D.}\ \bibnamefont {Somma}},\ }\bibfield
  {title} {\bibinfo {title} {Improved bounds for eigenpath traversal},\
  }\href@noop {} {\bibfield  {journal} {\bibinfo  {journal} {Physical Review
  A}\ }\textbf {\bibinfo {volume} {89}},\ \bibinfo {pages} {012314} (\bibinfo
  {year} {2014})}\BibitemShut {NoStop}%
\bibitem [{\citenamefont {Sanders}\ \emph {et~al.}(2020)\citenamefont
  {Sanders}, \citenamefont {Berry}, \citenamefont {Costa}, \citenamefont
  {Tessler}, \citenamefont {Wiebe}, \citenamefont {Gidney}, \citenamefont
  {Neven},\ and\ \citenamefont {Babbush}}]{sanders2020compilation}%
  \BibitemOpen
  \bibfield  {author} {\bibinfo {author} {\bibfnamefont {Y.~R.}\ \bibnamefont
  {Sanders}}, \bibinfo {author} {\bibfnamefont {D.~W.}\ \bibnamefont {Berry}},
  \bibinfo {author} {\bibfnamefont {P.~C.}\ \bibnamefont {Costa}}, \bibinfo
  {author} {\bibfnamefont {L.~W.}\ \bibnamefont {Tessler}}, \bibinfo {author}
  {\bibfnamefont {N.}~\bibnamefont {Wiebe}}, \bibinfo {author} {\bibfnamefont
  {C.}~\bibnamefont {Gidney}}, \bibinfo {author} {\bibfnamefont
  {H.}~\bibnamefont {Neven}},\ and\ \bibinfo {author} {\bibfnamefont
  {R.}~\bibnamefont {Babbush}},\ }\bibfield  {title} {\bibinfo {title}
  {Compilation of fault-tolerant quantum heuristics for combinatorial
  optimization},\ }\href@noop {} {\bibfield  {journal} {\bibinfo  {journal}
  {PRX quantum}\ }\textbf {\bibinfo {volume} {1}},\ \bibinfo {pages} {020312}
  (\bibinfo {year} {2020})}\BibitemShut {NoStop}%
\bibitem [{\citenamefont {Motlagh}\ and\ \citenamefont
  {Wiebe}(2023)}]{motlagh2023generalized}%
  \BibitemOpen
  \bibfield  {author} {\bibinfo {author} {\bibfnamefont {D.}~\bibnamefont
  {Motlagh}}\ and\ \bibinfo {author} {\bibfnamefont {N.}~\bibnamefont
  {Wiebe}},\ }\bibfield  {title} {\bibinfo {title} {Generalized quantum signal
  processing},\ }\href@noop {} {\bibfield  {journal} {\bibinfo  {journal}
  {arXiv preprint arXiv:2308.01501}\ } (\bibinfo {year} {2023})}\BibitemShut
  {NoStop}%
\bibitem [{\citenamefont {Jansen}\ \emph {et~al.}(2007)\citenamefont {Jansen},
  \citenamefont {Ruskai},\ and\ \citenamefont {Seiler}}]{jansen2007bounds}%
  \BibitemOpen
  \bibfield  {author} {\bibinfo {author} {\bibfnamefont {S.}~\bibnamefont
  {Jansen}}, \bibinfo {author} {\bibfnamefont {M.-B.}\ \bibnamefont {Ruskai}},\
  and\ \bibinfo {author} {\bibfnamefont {R.}~\bibnamefont {Seiler}},\
  }\bibfield  {title} {\bibinfo {title} {Bounds for the adiabatic approximation
  with applications to quantum computation},\ }\href@noop {} {\bibfield
  {journal} {\bibinfo  {journal} {Journal of Mathematical Physics}\ }\textbf
  {\bibinfo {volume} {48}} (\bibinfo {year} {2007})}\BibitemShut {NoStop}%
\bibitem [{\citenamefont {Watrous}(2018)}]{watrous2018theory}%
  \BibitemOpen
  \bibfield  {author} {\bibinfo {author} {\bibfnamefont {J.}~\bibnamefont
  {Watrous}},\ }\href@noop {} {\emph {\bibinfo {title} {The Theory of Quantum
  Information}}}\ (\bibinfo  {publisher} {Cambridge {U}niversity {P}ress},\
  \bibinfo {year} {2018})\BibitemShut {NoStop}%
\end{thebibliography}%

    \appendix

   \end{document}